\def\lncs{0}
\newcommand{\remove}[1]{}
\newtheorem{proposition}{Proposition}
\newtheorem{theorem}{Theorem}
\newtheorem{definition}{Definition}
\newtheorem{lemma}{Lemma}
\newtheorem{claim}{Claim}
\newtheorem{remark}{Remark}
\newtheorem{corollary}{Corollary}
\newtheorem{myclaim}{Claim}
\crefname{claim}{Claim}{Claims}
\newcommand{\secparam}{n}
\newcommand{\nir}[1]{$\ll$\textsf{\color{red} Nir: { #1}}$\gg$}
\newcommand{\ilan}[1]{$\ll$\textsf{\color{blue} Ilan: { #1}}$\gg$}
\newcommand{\iftach}[1]{$\ll$\textsf{\color{orange} Iftach: { #1}}$\gg$}
\newcommand{\eylon}[1]{$\ll$\textsf{\color{purple} Eylon: { #1}}$\gg$}
\newcommand{\nir}[1]{}
\newcommand{\ilan}[1]{}
\newcommand{\iftach}[1]{}
\newcommand{\eylon}[1]{}
\newenvironment{boxfig}[2]{\begin{figure}[#1]\fbox{\begin{minipage}{\linewidth}
                        \vspace{0.2em}
                        \makebox[0.025\linewidth]{}
                        \begin{minipage}{0.95\linewidth}
            {{
                        #2 }}
                        \end{minipage}
                        \vspace{0.2em}
                        \end{minipage}}}{\end{figure}}
\newcommand{\pprotocol}[4]{
\begin{boxfig}{h!}{
\begin{center}
\textbf{#1}
\end{center}
    #4
\vspace{0.2em} } \caption{\label{#3} #2}
\end{boxfig}
}
\newcommand{\protocol}[4]{
\pprotocol{#1}{#2}{#3}{#4} }
\newcommand{\cH}{{\mathcal H}}
\newcommand{\cD}{{\mathcal D}}
\newcommand{\pr}[1]{\mathop{\mathbf {Pr}}\! \left[ {#1} \right]}
\newcommand{\prob}[2]{\mathop{\mathbf {Pr}}_{#1}\! \left[ #2 \right]}
\newcommand{\E}[1]{\mathop{\mathbf E}\! \left[ {#1} \right]}
\newcommand{\EE}[2]{\mathop{\mathbf E}_{#1}\! \left[ {#2} \right]}
\newcommand{\negl}{{\sf negl}}
\newcommand{\DCRH}{dCRH\xspace}
\newcommand{\MyAtop}[2]{\genfrac{}{}{0pt}{}{#1}{#2}}
\newcommand{\N}{{\mathbb{N}}}
\newcommand{\R}{{\mathbb{R}}}
\newcommand{\bit}{\{0,1\}}
\newcommand{\aka} {also known as\ }
\newcommand{\ie}  {i.e.,\ }
\newcommand{\poly}{\mathsf{poly}}
\newcommand{\concat}{\circ}
\newcommand{\SD}{\mathbf{\Delta}}
\newcommand{\F}{\mathcal{F}}
\newcommand{\sender}{\mathcal{S}}
\newcommand{\receiver}{\mathcal{R}}
\newcommand{\verifier}{\mathcal{V}}
\newcommand{\decom}{\mathsf{decom}}
\newcommand{\com}{\mathsf{com}}
\newcommand{\view}{\mathsf{view}}
\newcommand{\ith}[1]{{#1}\textsuperscript{th}}
\newcommand{\set}[1]{\left\{#1\right\}}
\newcommand{\ignore}[1]{}
\newcommand{\class}[1]{\textsf{#1}\xspace}
\newcommand{\NP}{{\class{NP}}}
\newcommand{\supp}{\mathsf{supp}}
\newcommand{\ShanEnt}{\mathsf{H}}
\newcommand{\MaxEnt}{\mathsf{H}_{\mathsf{max}}}
\newcommand{\sA}{\mathsf{A}}
\newcommand{\KL}{\mathbf{D}_{\mathsf{KL}}}
\newcommand{\WI}{\mathcal{WI}}
\newcommand{\szkprob}{\Pi}
\newcommand{\SZK}{{\class{SZK}}}
\newcommand{\deci}{D}
\newcommand{\zo}{\{0,1\}}
\newcommand{\sbc}{\mathcal{SBC}}
\newcommand{\idc}{\mathcal{IDC}}
\newcommand{\hyb}{\mathcal{H}}
\newcommand{\pp}{c}
\newcommand{\Gc}{G}
\newcommand{\size}[1]{\left|#1\right|}
\renewcommand{\set}[1]{\{#1\}}
\renewcommand{\zo}{\set{0,1}}
\newcommand{\zs}{{\zo^\ast}}
\newcommand{\vy}{\mathbf{y}}
\newcommand{\vt}{\mathbf{t}}
\newcommand{\vY}{\mathbf{Y}}
\newcommand{\Gs}{{\widetilde{\Gc}}}
\newcommand{\tth}[1]{$#1$'th\xspace}
\renewcommand{\ith}{\tth{i}}
\def\getsr{\gets}
\newcommand{\Haccsam}{\operatorname{AccH}}
\newcommand{\Hrealsam}{\operatorname{RealH}}
\newcommand{\Supp}{\operatorname{Supp}}
\newcommand{\wrt} {with respect to\ }
\newcommand{\Col}{\operatorname{\sf Col}}
\title{Distributional Collision Resistance Beyond One-Way Functions}
\author{
Nir Bitansky
\thanks{School of Computer Science, Tel Aviv University. Email: \texttt{nirbitan@tau.ac.il}. Member of the Check Point Institute of Information Security. Supported by ISF grant 18/484, the Alon Young Faculty Fellowship, and by Len Blavatnik and the Blavatnik Family foundation.}
\and
Iftach Haitner
\thanks{School of Computer Science, Tel Aviv University. Email: \texttt{iftachh@cs.tau.ac.il}. Member of the Check Point Institute for Information Security. Research supported by ERC starting grant 638121.}
\and
Ilan Komargodski
\thanks{Cornell Tech, New York, NY. Email: \texttt{komargodski@cornell.edu}. Supported in part by an AFOSR grant FA9550-15-1-0262. }
\and{Eylon Yogev}
\thanks{Department of Computer Science, Technion. Email: \texttt{eylony@gmail.com}. Supported by the European Union's Horizon 2020 research and innovation program under grant agreement No.\ 742754.}
}
\date{}
\begin{document}
\maketitle

\begin{abstract}
    Distributional collision resistance is a relaxation of collision resistance that only requires that it is hard to sample a collision $(x,y)$ where $x$ is uniformly random and $y$ is uniformly random conditioned on colliding
  with $x$.
The notion lies between one-wayness and collision resistance, but its exact power is still not well-understood. On one hand, distributional collision resistant hash functions cannot be built from one-way functions in a black-box way, which may suggest that they are stronger. On the other hand, so far, they have not yielded any applications beyond one-way functions.

Assuming distributional collision resistant hash functions, we construct \emph{constant-round} statistically hiding commitment scheme. Such commitments are not known based on one-way functions and are impossible to obtain from one-way functions in a black-box way. Our construction relies on the reduction from inaccessible entropy generators to statistically hiding commitments by Haitner et al.\ (STOC '09). In the converse direction, we show that two-message statistically hiding commitments imply distributional collision resistance, thereby establishing a loose equivalence between the two notions.

  A corollary of the first result is that constant-round statistically hiding commitments are implied by average-case hardness in the class $\SZK$ (which is known to imply distributional collision resistance). This implication seems to be folklore, but to the best of our knowledge has not been proven explicitly. We provide yet another proof of this implication, which is arguably more direct than the one going through distributional collision resistance.
\end{abstract}

\thispagestyle{empty}
%\newpage
%\tableofcontents
\newpage
\setcounter{page}{1}

\section{Introduction}\label{sec:Introduction}

Distributional collision resistant hashing (\DCRH), introduced by Dubrov and
Ishai~\cite{DubrovI06}, is a relaxation of the notion of collision
resistance. In (plain) collision resistance, it is guaranteed that no efficient
adversary can find \emph{any} collision given a random hash function in the
family. In \DCRH, it is only guaranteed that no
efficient adversary can sample \emph{a random} collision given a random hash
function in the family. More precisely, given a random hash function
$h$ from the family, it is computationally hard to sample a
pair $(x,y)$ such that $x$ is uniform and $y$ is uniform in the preimage set
$h^{-1}(x)=\{z \colon h(x) = h(z)\}$. This hardness is captured by requiring that the adversary cannot get statistically-close to this distribution over collisions.\footnote{There are some subtleties in defining this precisely. The definition we use differs from previous ones~\cite{DubrovI06,HarnikN10,KomargodskiY18}.  We elaborate on the exact definition and the difference in the technical overview below and in \Cref{sec:dcrh}.}

\paragraph{The power of \DCRH.}
Intuitively, the notion of \DCRH seems quite weak. The adversary may even be able to sample collisions from the set of {\em all} collisions, but only from a skewed distribution, far
from the random one. Komargodski and Yogev \cite{KomargodskiY18} show that \DCRH can be constructed assuming average-case hardness in 
the complexity class {\em statistical zero-knowledge} (\SZK), whereas a similar implication is not known for multi-collision resistance.\footnote{Multi-collision resistance is another relaxation of collision resistance, where it is only hard to find multiple elements that all map to the same image. Multi-collision resistance does not imply \DCRH in a black-box way~\cite{KomargodskiNY18}, but Komargodski and Yogev~\cite{KomargodskiY18} give a non-black-box construction.} (let alone plain collision resistance). This can be seen as evidence suggesting that \DCRH may be weaker than collision resistance, or even multi-collision resistance \cite{KomargodskiNY17,BermanDRV18,BitanskyKP17,KomargodskiNY18}.

%\footnote{Formally, the results in \cite{KomargodskiY18} were stated for a slightly weaker notion of \DCRH, but they also hold for the definition we use in this paper.} \nir{for now removed the footnote although it's somewhat needed, we have to many...}

Furthermore, \DCRH has not led to the same cryptographic applications as collision resistance, or even multi-collision resistance. In fact, \DCRH has no known applications beyond those implied by one-way functions.

At the same time, \DCRH is not known to follow from one-way functions, and actually, cannot follow based on black-box reductions~\cite{Simon98}. In fact, it can even be separated from indistinguishability obfuscation (and one-way functions) \cite{AsharovS16}. Overall, we are left with a significant gap in our understanding of the power of \DCRH:

\begin{center}
  \textit{Does the power of \DCRH go beyond one-way functions?}
\end{center}

\subsection{Our Results}
We present the first application of \DCRH that is not known from one-way functions and is provably unachievable from one-way functions in a black-box way.

\begin{theorem}\label{thm:dcrh}
  \DCRH implies   \emph{constant-round} statistically hiding commitment scheme. 
\end{theorem}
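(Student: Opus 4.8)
The plan is to route the implication through the notion of \emph{inaccessible entropy} and then invoke the reduction of Haitner et al.\ from inaccessible entropy generators to constant-round statistically hiding commitments. Concretely, from a \DCRH family $h\colon\zn\to\zo^m$ I would construct a two-block generator whose \emph{real} (Shannon) entropy provably exceeds the entropy that any efficient adversary can \emph{access}, and then feed this generator into the Haitner et al.\ machinery. Since the generator has a constant number of blocks, the resulting statistically hiding commitment will have a constant number of rounds, which is exactly what the theorem claims.

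The generator I would use is $G(x)=(h(x),\,x)$ on a uniform $x\gets\zn$: the first block is the hash value and the second block reveals a preimage. The real entropy of the second block conditioned on the first is exactly $\HSh(X\mid h(X))$, the expected log-size of a preimage class, while the entire output has entropy $\HSh(G(X))=n$. The goal of the entropy analysis is to show that no efficient online generator can access entropy close to $\HSh(X\mid h(X))$ in the second block while keeping the first block distributed close to $h(U_n)$; that is, there is a noticeable gap between the real and the accessible entropy of $G$.

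The core step is to establish this gap by a rewinding reduction to \DCRH. Suppose toward a contradiction that some efficient adversary, after committing to a first block $\hat y$ and reaching internal state $s$, can output a second block $\hat x\in h^{-1}(\hat y)$ whose conditional entropy given $s$ is close to $\HSh(X\mid h(X))$. I would then run the adversary once to obtain $(s,\hat x_1)$, rewind it to the state $s$, and run the second block again to obtain an independent $\hat x_2\in h^{-1}(\hat y)$. Because the second block carries almost all of the real conditional entropy, both $\hat x_1$ and $\hat x_2$ are close to uniform in $h^{-1}(\hat y)$, and $\hat y$ is close to $h(U_n)$; hence $(\hat x_1,\hat x_2)$ is statistically close to the ideal random-collision distribution ($x$ uniform, $y$ uniform in $h^{-1}(h(x))$), contradicting \DCRH. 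This forces the accessible entropy of the second block to be bounded away from $\HSh(X\mid h(X))$, yielding the required gap.

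The main obstacle I anticipate is quantitative and distributional. First, the \DCRH definition pins down a precise target distribution over collisions, so the reduction must control not only that $\hat x_1,\hat x_2$ are near-uniform within their preimage class, but also that the marginal hash value matches $h(U_n)$ and that the first coordinate is close to uniform over $\zn$; turning a lower bound on conditional Shannon entropy into statistical closeness to this exact distribution is the delicate part, and is where the particular \DCRH definition adopted in this paper is used. Second, the raw gap obtained this way may be only inverse-polynomial, so before applying the Haitner et al.\ reduction I would amplify it (e.g.\ by taking many independent copies of $G$) to a constant fraction of the real entropy, while checking that this preserves the constant number of blocks and hence the constant round complexity of the final statistically hiding commitment.
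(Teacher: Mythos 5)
Your proposal follows essentially the same route as the paper: the same two-block generator $\Gc(h,x)=(h(x),x)$, the same rewinding adversary that fixes the first block and resamples the second block twice to produce a collision, and the same appeal to the Haitner et al.\ reduction from inaccessible Shannon-entropy generators with constantly many blocks to constant-round statistically hiding commitments. The ``delicate part'' you flag is handled in the paper by combining Pinsker's inequality with the chain rule for KL divergence, which bounds the distance to the ideal collision distribution $\Col(h)$ by (roughly) the square root of the accessible-entropy deficit; the amplification step you anticipate is unnecessary, since the cited reduction already tolerates an inverse-polynomial gap between real and accessible entropy.
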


Such commitment schemes cannot be constructed from one-way functions (or even permutations) in a black-box way due to a result of Haitner, Hoch, Reingold and Segev~\cite{HaitnerHRS15}. They show that the
number of rounds in such commitments must grow quasi-linearly in the security parameter.

The heart of Theorem \ref{thm:dcrh} is a construction of an inaccessible-entropy generator~\cite{HaitnerRVW09,HaitnerReVaWe18} from  \DCRH.

An implication of the above result is that constant-round statistically hiding
commitments can be constructed from
average-case hardness in $\SZK$. Indeed, it is known that such hardness implies the existence of a \DCRH~\cite{KomargodskiY18}.

\begin{corollary}\label{cor:szk}
  A Hard-on-average problem in $\SZK$ implies a
  \emph{constant-round} statistically hiding commitment scheme. 
\end{corollary}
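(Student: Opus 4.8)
The plan is to prove Corollary~\ref{cor:szk} by chaining two implications. The first is the result of Komargodski and Yogev~\cite{KomargodskiY18}, which we are granted: a hard-on-average problem in $\SZK$ yields a \DCRH. The second is Theorem~\ref{thm:dcrh}, which we have already established: a \DCRH implies a constant-round statistically hiding commitment. Composing these gives the corollary immediately, so the ``work'' of the corollary is really just to observe that the two implications compose cleanly and that the intermediate \DCRH primitive matches the definition used in the hypothesis of Theorem~\ref{thm:dcrh}.

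Concretely, I would first recall the precise statement of the \SZK-to-\DCRH implication, paying attention to the definition of \DCRH it produces, since the excerpt flags that there are subtleties in the definition and that the definition used here differs from prior work~\cite{DubrovI06,HarnikN10,KomargodskiY18}. The main thing to check is that the \DCRH one obtains from average-case \SZK-hardness satisfies \emph{our} definition (as formalized in \Cref{sec:dcrh}), rather than merely one of the earlier variants. If the definitions differ only up to the standard statistical-distance formulation, I would note the (routine) equivalence or adaptation; otherwise I would invoke whichever version of the Komargodski--Yogev construction produces the flavor of \DCRH that Theorem~\ref{thm:dcrh} consumes.

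With the definitions aligned, the proof is a one-line composition: given a hard-on-average problem $\szkprob \in \SZK$, apply~\cite{KomargodskiY18} to obtain a \DCRH family $\MCRH$, then feed $\MCRH$ into Theorem~\ref{thm:dcrh} to obtain the desired constant-round statistically hiding commitment. Since both constituent reductions are efficient and preserve the constant round complexity (Theorem~\ref{thm:dcrh} outputs a \emph{constant}-round scheme regardless of the \DCRH used), the resulting commitment is constant-round as claimed.

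I expect the only genuine obstacle to be the bookkeeping around definitional compatibility: confirming that the \DCRH notion guaranteed by the \SZK hypothesis is exactly the notion (and security flavor, \eg the statistical-closeness requirement on the sampled collision distribution) that Theorem~\ref{thm:dcrh} requires as input. If the paper's \DCRH definition is strictly the one realized by~\cite{KomargodskiY18}, this is immediate; if it is a mild strengthening or reformulation, I would include a short remark reconciling the two before invoking the composition. No new quantitative loss or round-complexity analysis is needed beyond what the two cited results already provide.
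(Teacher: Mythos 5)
Your proposal is correct and matches the paper's own (primary) derivation of \Cref{cor:szk}: compose the \SZK-to-\DCRH construction of~\cite{KomargodskiY18} with \Cref{thm:dcrh}, and the definitional subtlety you flag is exactly the one the paper handles by noting that both constructions in~\cite{KomargodskiY18} already satisfy its strengthened \DCRH definition (with a universal polynomial $p$ rather than one per adversary). The paper additionally gives a second, more direct proof in \Cref{sec:szk} that bypasses \DCRH entirely via instance-dependent commitments and witness-indistinguishable proofs, but that is presented as an alternative, not as the proof of the corollary.
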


The statement of Corollary~\ref{cor:szk} has been treated as known in several previous works (c.f.\ \cite{HaitnerRVW09,DvirGRV11,BitanskyDV17}), but a proof of this statement has so far not been published or (to the best of our knowledge) been publicly available. We also provide an alternative proof of this statement (and in particular, a different commitment scheme) that does not go through a construction of a \DCRH, and is arguably more direct.

% \ilan{Another implication of our main result (\Cref{thm:dcrh}) is that \DCRH  implies 
% a constant-round zero-knowledge proof for all \NP. 
% \begin{corollary}[who to cite? \iftach{GMW}]
% The existence of a \DCRH implies the existence of a constant-round zero-knowledge proof for all \NP.
% \end{corollary}
% }
% \eylon{I think the citation should be to \cite{BrassardCC88,GoldreichMW91} but it is only an argument and not a proof.}\ilan{Eylon, the point here is proofs. GMW gives O(1) round zk-args from owfs, right?}
% \nir{Guys, I don't think there's point in claiming as formal corollaries specific applications of constant round SCHCs. 
% In any case, the right citation is Goldreich-Kahan. GMW gives only constant soundness and is based on OWFs}\ilan{Nir, why not mention it? (I am ok with dropping the formal corollary)}

\paragraph{A limit on the power of \DCRH.}  We also show a converse connection between \DCRH and statistically hiding commitments. Specifically, we show that \emph{any} two-message statistically hiding commitment implies a \DCRH function family. 
\begin{theorem}
   Any two-message statistically hiding commitment scheme implies  \DCRH.
\end{theorem}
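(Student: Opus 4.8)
The plan is to use the sender's message function of the commitment directly as the hash. Write the two-message scheme so that the honest receiver sends a first message $q$ (its coins induce a distribution over $q$), and to commit to a bit $b$ the sender computes $c = \Com(q,b;r)$ with fresh randomness $r \in \zo^{\rho}$; computational binding asserts that no efficient cheating sender, given an honestly generated $q$, can output a commitment together with valid openings to both $0$ and $1$ except with negligible probability. I define the hash family by letting the key be $q$ (sampled exactly as the honest receiver samples its first message, so key generation is efficient) and by setting
\[
  h_q(b,r) = \Com(q,b;r), \qquad (b,r) \in \zo \times \zo^{\rho}.
\]
This is efficiently computable, and its collisions are precisely ambiguous commitments. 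I will show that sampling a near-random collision of $h_q$ is as hard as breaking binding, which yields \DCRH.

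The key step is to argue that in the ideal random-collision distribution almost half of the collisions open to \emph{different} bits and are therefore binding violations. Sample $x = (b_x,r_x)$ uniformly and $y = (b_y,r_y) \gets h_q^{-1}(h_q(x))$ uniformly, and set $c = h_q(x)$. Conditioned on $c$, the variables $x$ and $y$ are i.i.d.\ uniform over the preimage set $h_q^{-1}(c)$, so writing $p(c) = \pr{b_x = 0 \mid c}$ we have $\pr{b_x \neq b_y} = \E{2\,p(c)\bigl(1-p(c)\bigr)}$, with the expectation over the law of $c = h_q(x)$ for uniform $x$, namely $\bar D = \tfrac12 D_0 + \tfrac12 D_1$ where $D_b = \Com(q,b;U_\rho)$. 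A direct calculation gives $\SD(D_0,D_1) = \E{\,\lvert 2p(c)-1\rvert\,} \le \delta$, where $\delta = \negl$ by statistical hiding. Since $2p(1-p) = \tfrac12\bigl(1 - (2p-1)^2\bigr) \ge \tfrac12\bigl(1 - \lvert 2p-1\rvert\bigr)$, this yields $\pr{b_x \neq b_y} \ge \tfrac{1-\delta}{2} \ge \tfrac12 - \negl$. In particular a random collision is a genuine, ambiguous collision with probability essentially $1/2$, so the distribution is non-trivial.

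The reduction then follows. Suppose an efficient $\A$, on input $h_q$, outputs a pair whose distribution is within statistical distance $\epsilon$ of the ideal random-collision distribution. A binding adversary, on honest first message $q$, runs $\A(h_q)$ to obtain $((b_x,r_x),(b_y,r_y))$ and outputs the commitment $c = \Com(q,b_x;r_x)$ together with the two decommitments. By closeness, the event that the output is a true collision of $h_q$ with $b_x \neq b_y$ occurs with probability at least $\tfrac12 - \negl - \epsilon$, and whenever it occurs the two openings form a binding violation. Computational binding forces $\tfrac12 - \negl - \epsilon \le \negl$, i.e.\ $\epsilon \ge \tfrac12 - \negl$. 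Hence every efficient adversary is at least $\tfrac12 - \negl$ far from the ideal collision distribution, a strong form of \DCRH.

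The main obstacle is matching the precise \DCRH definition of \Cref{sec:dcrh} rather than the clean accounting above: in particular its exact quantification of the admissible statistical distance (which the bound $\epsilon \ge \tfrac12 - \negl$ comfortably implies) and any syntactic requirement that the hash be compressing. The latter is not automatic, since a two-message scheme may a priori produce commitments longer than $(b,r)$; however statistical hiding forces the images of $\Com(q,0;\cdot)$ and $\Com(q,1;\cdot)$ to essentially coincide, so $h_q$ is \emph{effectively} two-to-one (its image carries about $\rho$ bits of entropy while its domain has $\rho+1$ bits), and I expect only mild massaging (e.g.\ composing the output with a pairwise-independent hash, or re-encoding it to length $\rho$) to produce a syntactically shrinking family should the definition demand it. Some care is also needed to confirm that the honest receiver message is the correct notion of a random key and that binding is stated against a cheating sender who sees it, which is exactly the standard formulation.
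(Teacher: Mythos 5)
Your proposal is correct and takes essentially the same approach as the paper: the hash is the sender's commitment function keyed by the receiver's first message, and an adversary close to the ideal collision distribution breaks binding because statistical hiding forces a random collision to flip the committed bit with probability $\tfrac12-\negl(n)$. The only cosmetic difference is in proving that last claim --- you compute $\E{2p(c)(1-p(c))}$ directly via $\SD(D_0,D_1)=\E{\lvert 2p(c)-1\rvert}$, while the paper applies Markov's inequality to $\EE{c}{\SD(B_c,B)}$ --- and your closing worry about compression is moot, since the paper's \DCRH{} definition explicitly imposes no such requirement.
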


This establishes a loose equivalence between \DCRH and statistically hiding commitments. Indeed, the commitments we construct from \DCRH require more than two messages. Interestingly, we can even show that such commitments imply a stronger notion of \DCRH where the adversary's output distribution is not only noticeably far from the random collision distribution, but is $(1-\negl(n))$-far. 
%We note however  that the connection between \DCRH and constant-round statistically hiding we prove above is not tight, since the reduction of \cite{HaitnerReVaWe18} turns a two-block inaccessible entropy (and thus \DCRH) into 17-round statistically hiding commitment (and not two).

\subsection{Related Work on Statistically Hiding Commitments}
Commitment schemes, the digital analog of sealed envelopes, are central to cryptography. 
%
%Many of the most important
%cryptographic protocols rely on some form of commitments, perhaps most notably,
%secure multiparty computation~\cite{GennaroRR98} and the existence of a
%zero-knowledge protocol for membership in all languages in ${\sf
%  NP}$~\cite{GoldreichMW91,BrassardCC88}.\nir{Talk directly about the applications of statistically hiding commitments and not commitments in general; can do this later on, once you've noted the difference between the two.  Cite constant-round  ZK proofs [Goldreich-Kahan]. Can also add public-coin statitstical zero knowledge arguments and cite Barak01, and PassRosen05.}
%
More precisely, a commitment scheme is a two-stage interactive protocol between a sender $S$ and
a receiver $R$. After the commit stage, $S$ is bound to (at most)
one value, which stays hidden from $R$, and in the reveal stage $R$ learns this
value. The immediate question arising is what it means to be ``bound to'' and to
be ``hidden''. Each of these security properties can come in two main flavors,
either \emph{computational security}, where a polynomial-time adversary cannot
violate the property except with negligible probability, or the stronger notion
of \emph{statistical security}, where even an unbounded adversary cannot violate
the property except with negligible probability. However, it is known that there
do \emph{not} exist commitment schemes that are simultaneously statistically
hiding and statistically binding. 

There exists a one-message (i.e., non-interactive) statistically binding
commitment schemes assuming one-way permutations (Blum~\cite{Blum81}). From
one-way functions, such commitments can be achieved by a two-message protocol (Naor~\cite{Naor91} and H{\aa}stad, Impagliazzo, Levin and Luby~\cite{HILL99}).

Statistically hiding commitments schemes have proven to be somewhat more difficult to construct. Naor, Ostrovsky, Venkatesan and Yung~\cite{NaorOVY92} gave a
statistically hiding commitment scheme protocol based on one-way permutations,
whose linear number of rounds matched the lower bound of \cite{HaitnerHRS15} mentioned above. After
many years, this result was improved by Haitner, Nguyen, Ong, Reingold and Vadhan~\cite{HaitnerNORV09} constructing such commitment  based on the minimal hardness assumption that one-way functions exist. The reduction of \cite{HaitnerNORV09} was later simplified and made more efficient by Haitner, Reingold, Vadhan and Wee~\ \cite{HaitnerRVW09,HaitnerReVaWe18} to match, in some settings, the round complexity lower bound  of \cite{HaitnerHRS15}.  Constant-round statistically hiding commitment protocols are known to
exist based on families of collision resistant hash functions~\cite{NaorY89,DamgardPP93,HaleviM96}.  Recently, Berman, Degwekar, Rothblum and Vasudevan~\cite{BermanDRV18} and Komargodski, Naor and Yogev~\cite{KomargodskiNY18} constructed
constant-round statistically hiding commitment protocols assuming the existence of
 \emph{multi}-collision resistant hash functions.

Constant-round statistically hiding commitments are a basic building block in many fundamental applications. Two prominent examples are constructions of {\em constant-round} zero-knowledge proofs for all {\NP} (Goldreich and Kahan~\cite{GoldreichKahan}) and {\em constant-round} public-coin statistical zero-knowledge arguments for {\NP} (Barak~\cite{Barak01}, Pass and Rosen~\cite{PassR08}).

Statistically hiding commitment are also known to be tightly related to the hardness
of the class of problems that posses a statistical zero-knowledge protocol,
i.e., the class {\sf SZK}.  Ong and Vadhan~\cite{OngV08} showed that a language in
${\sf NP}$ has a zero-knowledge protocol if and only if the language has an
``instance-dependent'' commitment scheme. An instance-dependent commitment
scheme for a given language is a commitment scheme that can depend on an
instance of the language, and where the hiding and binding properties are
required to hold only on the YES and NO instances of the language, respectively.

\subsection{Directions for Future Work}

The security notions of variants of collision resistance, including plain collision resistance and multi-collision resistance, can be phrased in the language of entropy. For example, plain collision resistance requires that once a hash value $y$ is fixed the (max) entropy of preimages  that any efficient adversary can find is zero. In multi-collision resistance, it may be  larger than zero, even for every $y$, but still bounded by the size of allowed multi collisions. In distributional collision resistance, the (Shannon) entropy is close to maximal. 

Yet, the range of applications of collision resistance (or even multi-collision resistance) is significantly larger than those of distributional collision resistance. Perhaps the most basic such application is \emph{succinct} commitment protocols which are known from plain/multi-collision resistance but not from distributional collision resistance (by \emph{succinct} we mean that the total communication is shorter than the string being committed to). Thus, with the above entropy perspective in mind, a natural question is to characterize the full range or parameters between distributional and plain collision resistance and understand for each of them what are the applications  implied. A more concrete question is to find the minimal notion of security for collision resistance that implies succinct commitments.

    A different line of questions concerns understanding better the notion of distributional collision resistance and constructing it from more assumptions. Komargodski and Yogev constructed it from multi-collision resistance and from the average-case hardness of SZK. Can we construct it, for example, from the multivariate quadratic (MQ) assumption~\cite{MatsumotoI88}  or can we show an attack for random degree 2 mappings? Indeed, we know that random degree 2 mappings cannot be used for plain collision resistant hashing~\cite[Theorem 5.3]{ApplebaumHIKV17}.

%%% Local Variables:
%%% TeX-master: "main.tex"
%%% End:

\section{Technical Overview}\label{sec:Technical}

In this section, we give an overview of our techniques. We start with a more precise statement of the definition of \DCRH and a comparison with previous versions of its definition.

A \DCRH is a family of functions $\cH_n = \{h\colon \bit^n\to\bit^m \}$. (The functions are not necessarily compressing.) The security guarantee is that there exists a universal polynomial $p(\cdot)$ such that for every efficient adversary $\sA$  it holds that
\begin{align*}
    \SD\left((h,\sA(1^n,h)),(h,\Col(h))\right) \ge  \frac{1}{p(n)},
\end{align*}
where $\SD$ denotes statistical distance, $h\leftarrow \cH_n$ is chosen uniformly at random, and $\Col$ is a random variable that is sampled in the following way: Given $h$, first sample $x_1\leftarrow \bit^n$ uniformly at random and then sample $x_2$ uniformly at random from the set of all preimages of $x_1$ relative to $h$ (namely, from the set $\{x \colon h(x)=h(x_1)\}$). Note that $\Col$ may not be efficiently samplable and intuitively, the hardness of \DCRH says that there is no efficient way to sample from $\Col$, even approximately. 

Our definition is stronger than previous definitions of \DCRH~\cite{DubrovI06,HarnikN10,KomargodskiY18} by that we require the existence of a universal polynomial $p(\cdot)$, whereas previous definitions allow a different polynomial per adversary. Our modification seems necessary to get non-trivial applications of \DCRH, as the previous definitions are not known to imply one-way functions. In contrast, our notion of \DCRH implies distributional one-way functions which, in turn, imply one-way functions~\cite{ImpagliazzoL89} (indeed, the definition of distributional one-way functions requires a universal polynomial rather than one per adversary).\footnote{The previous definition is known to imply a weaker notion of distributional one-way functions (with a different polynomial bound per each adversary)~\cite{HarnikN10}, which is not known to imply one-way functions.} We note that previous constructions of \DCRH (from multi-collision resistance and $\SZK$-hardness)~\cite{KomargodskiY18} apply to our stronger notion as well.

\subsection{Commitments from \DCRH and Back}
We now describe our construction of constant-round statistically hiding commitments from \DCRH. To understand the difficulty, let us recall the standard approach to constructing statistically hiding commitments from (fully) collision resistant hash functions \cite{NaorY89,DamgardPP93,HaleviM96}. Here to commit to a bit $b$, we hash a random string $x$, and output $(h(x),s, b \oplus Ext_s(x))$, where $s$ is a seed for a strong randomness extractor $Ext$ and $b$ is padded with a (close to) random bit extracted from $x$. When $h$ is collision resistant, $x$ is computationally fixed and thus so is the bit $b$. However, for a \DCRH $h$, this is far from being the case: for any $y$, the sender might potentially be able to sample preimages from the set of all preimages.

The hash $h(x)$, however, does yield a weak binding guarantee. For simplicity of exposition, let us assume that any $y\in \zo^m$ has exactly $2^k$ preimages under $h$ in $\zo^n$. Then, for a noticeable fraction of commitments $y$, the adversary cannot open $y$ to a uniform $x$ in the preimage set $h^{-1}(y)$. In particular, the adversary must choose between two types of {\em entropy losses}: it either outputs a commitment $y$ of entropy $m'$ noticeably smaller than $m$, or after the commitment, it can only open to a value $x$ of entropy $k'$ noticeably smaller than $k$. One way or the other, in total $m'+k'$ must be noticeably smaller than $n=m+k$. This naturally leads us to the notion of {\em 
inaccessible entropy} defined by Haitner,  Reingold,  Vadhan and Wee~\cite{HaitnerRVW09,HaitnerReVaWe18}.

Let us briefly recall what inaccessible entropy is (see
\Cref{sec:prelim:inaccessibe} for a precise definition). The entropy of 
a random variable $X$ is a measure of ``the amount of randomness'' that $X$ 
contains. The notion of (in)accessible entropy measures the feasibility of sampling high-entropy strings that are {\em consistent} with a given random process. Consider the two-block generator (algorithm) $\Gc$ that samples $x \gets \zo^n$, and then outputs $y= h(x)$ and   $x$. The {\em real entropy} of $\Gc$ is defined as  the entropy of the generator's (total) output in a random execution, and is clearly equal to $n$, the length of $x$. The \emph{accessible entropy} of $\Gc$ measures the entropy of these output blocks from the point of view of an efficient \emph{$\Gc$-consistent} generator, which might act arbitrarily, but still outputs a value in the support of $\Gc$. 

Assume for instance that $h$ had been (fully) collision resistant.  Then from the point of view of any efficient $\Gc$-consistent generator $\Gs$, conditioned on its first block $y$, and its internal randomness,  its second output block is fixed (otherwise, $\Gc$ can be used for finding a collision). In other words, while the value of $x$ given $y$ may have entropy $k = n -m$, this entropy  is completely {\em inaccessible } for an efficient $\Gc$-consistent generator. (Note that we do not measure  here the entropy of the output blocks of $\Gs$, which clearly can be as high as the real entropy of $\Gc$ by taking $\Gs= \Gc$. Rather, we measure the entropy of the block from \emph{$\Gs$'s point of view}, and in particular, the entropy of its second block given the randomness used for generating the first block.). Haitner et al.\ show that any noticeable gap between the real entropy and the inaccessible entropy  of such an efficient generator can be leveraged for constructing statistically hiding commitments, with a number of rounds that is linear in the number of blocks.

Going back to \DCRH, we have already argued that in the simple case that $h$ is regular and onto $\zo^m$, we get a noticeable gap between the real entropy $n=m+k$ and the accessible entropy $m'+k' \leq m+k - 1/\poly(n)$. We prove that this is, in fact, true for any \DCRH: 

\begin{lemma}\label{thm:dcrhtpIAE}
  \DCRH implies a two-block inaccessible entropy generator.  
\end{lemma}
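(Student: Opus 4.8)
The plan is to realize the two-block generator $\Gc$ sketched above and argue that, while its real (Shannon) entropy equals $n$, no efficient $\Gc$-consistent generator can access more than $n - 1/\poly(n)$ bits of entropy; the gap will be extracted directly from the \DCRH guarantee. I treat the key $h \gets \cH_n$ as a public parameter given to every generator, and on this parameter $\Gc$ samples $x \gets \zn$ and outputs the blocks $y = h(x)$ and then $x$. Since $(h(x), x)$ is a function of $x$, the real entropy is exactly $n$ for every $h$ (see \Cref{sec:prelim:inaccessibe}). It thus remains to bound the accessible entropy, namely the value of $H(\tilde Y_1) + H(\tilde Y_2 \mid R_1)$ maximized over efficient consistent generators $\Gs$, where $\tilde Y_1, \tilde Y_2$ denote $\Gs$'s two output blocks and $R_1$ the coins it uses to produce the first block (so $\tilde Y_1$ is determined by $R_1$).

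The crux of the argument is a clean decomposition of the accessible entropy into the two ways a consistent generator can lose entropy. Writing $p$ for the distribution of $\tilde Y_1$, letting $q = h(U_n)$ be the image distribution (so $q_y = \size{h^{-1}(y)}/2^n$), letting $D_{r_1}$ be the conditional distribution of $\tilde Y_2$ given $R_1 = r_1$ (supported on $h^{-1}(\tilde Y_1(r_1))$), and $U_{r_1}$ the uniform distribution on that preimage set, I would establish
\begin{align*}
H(\tilde Y_1) + H(\tilde Y_2 \mid R_1) \;=\; n - \KL(p \,\|\, q) - \EE{r_1}{\KL(D_{r_1} \,\|\, U_{r_1})},
\end{align*}
where both divergence terms are nonnegative. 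The first term penalizes a first block whose image distribution deviates from $h(U_n)$, and the second penalizes a second block that is not a uniform preimage; consistency ($\tilde Y_2 \in h^{-1}(\tilde Y_1)$) is exactly what makes this identity hold, and it shows that the information-theoretic maximum $n$ is attained only by the honest behavior. I expect this decomposition, together with turning the resulting scalar entropy deficit into a statistical-distance statement, to be the main technical obstacle.

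The reduction then goes as follows. Suppose, towards a contradiction, that some efficient consistent $\Gs$ has accessible entropy at least $n - \delta$. The decomposition immediately yields $\KL(p \,\|\, q) \le \delta$ and $\EE{r_1}{\KL(D_{r_1} \,\|\, U_{r_1})} \le \delta$. I then build a \DCRH adversary $\sA$ that on input $h$ runs $\Gs$'s first block once to fix coins $r_1$ and the block $\tilde y$, then reruns $\Gs$'s second block twice with independent fresh coins to obtain two preimages $x_1, x_2 \in h^{-1}(\tilde y)$, and outputs $(x_1, x_2)$. This rewinding is precisely what converts a single-preimage generator into a sampler with the product-over-preimages structure of $\Col$. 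Comparing $\sA$'s output to $\Col$ through a hybrid that first replaces each $D_{r_1}$ by $U_{r_1}$ and then replaces $p$ by $q$, bounding each step by Pinsker's inequality (using that the statistical distance of a two-fold product grows by at most a factor $2$, and averaging over $h$ via Jensen), gives $\SD((h, \sA(1^n, h)), (h, \Col(h))) = O(\sqrt{\delta})$. Setting $\delta = \Theta(1/p(n)^2)$ for the universal polynomial $p$ of the \DCRH definition makes this bound smaller than $1/p(n)$, contradicting \DCRH security. Hence every efficient consistent generator has accessible entropy at most $n - 1/\poly(n)$, so $\Gc$ is a two-block inaccessible-entropy generator, as required.
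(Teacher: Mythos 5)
Your proposal is correct and follows essentially the same route as the paper: the same two-block generator $\Gc(h,x)=(h(x),x)$, the same rewinding adversary that fixes the first-block coins and resamples the second block twice, and the same use of the KL chain rule, Pinsker's inequality, and Jensen's inequality over $h$ to turn an entropy deficit of $\delta=\Theta(1/p(n)^2)$ into a statistical distance below $1/p(n)$. The only (cosmetic) difference is that you package the deficit as an exact identity $\KL(p\,\|\,q)+\EE{r_1}{\KL(D_{r_1}\,\|\,U_{r_1})}=n-H(\tilde Y_1)-H(\tilde Y_2\mid R_1)$ measured at the level of the image distribution, whereas the paper bounds $\KL(X_1^{\sA}\,\|\,U_n)$ and the conditional term separately, needing one extra Jensen step for the inequality $\EE{y}{\log|h^{-1}(y)|}+\HSh(Y\mid Z)\le n$.
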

The block generator itself is  the simple generator described above: 
$$
\text{output $h(x)$ and then $x$, for  $x\gets \zo^n$}\enspace.
$$
%which is also the same generator that would be used for (fully) collision-resistant hash functions. 
The proof, however, is more involved than in the case of collision resistance. In particular, it is sensitive to the exact notion of entropy used. Collision resistant hash functions satisfy a very clean and simple guarantee --- the {\em maximum entropy}, capturing the support size, is always at most $m < n$. In contrast, for \DCRH (compressing or not), the maximum entropy could be as large as $n$, which goes back to the fact that the adversary may be able to sample from the set of {\em all} collisions (albeit from a skewed distribution). Still, we show a gap with respect to average (a.k.a Shannon) accessible entropy, which suffices for constructing statistically hiding commitments \cite{HaitnerReVaWe18}.

%\ilan{rewrote the following two paragraphs.} The construction of the  generator is the na\"ive construction we discussed above and already hinted why it should admit inaccessible entropy. That is, 
%the generator consists of two ``blocks'', $y=h(x)$ and $x$. Since $x$ is of length 
%$n$, the ``real'' entropy of the generator is $n$. We show that any efficient adversary can at most generate noticeably less bits of entropy.

%In the case of plain collision resistance, it is trivial to show this as the second message of the generator has entropy 0, leading to an entropy gap. However, in the case of \DCRH, the marginal distribution of each of the two messages could be maximal. Indeed, an adversary might choose a $y$ for which it knows all preimages and then sample one uniformly at random, or it can sample $y$ at random and then output some preimage (not necessarily a uniform one in the preimage set). So, we have to take into account the \emph{joint distribution} of both messages. We do so in \Cref{our-generator}, where our proof uses various tools from information theory.

\paragraph{From commitments back to \DCRH.}
We show that any two-message statistically hiding commitment implies a \DCRH function family. %For concreteness, let us focus here on the case that the commitment scheme supports single bits. 
Let $(\sender,\receiver)$ be the sender and receiver of a statistically hiding bit commitment. The first message sent by the receiver is the description of the hash function: $h\leftarrow \receiver(1^n)$. The sender's commitment to a bit $b$, using randomness $r$, is the hash  of $x=(b,r)$. That is, $h(x) = \sender(h,b;r)$. 

To argue that this is a \DCRH, we show that any attacker that can sample collisions that are close to the random collision distribution $\Col$ can also break the binding of the commitment scheme. For this, it suffices to show that a collision $(b,r),(b',r')$ sampled from $\Col$, translates to equivocation --- the corresponding commitment can be opened to two distinct bits $b\neq b'$. Roughly speaking, this is because statistical hiding implies that a random collision to a random bit $b$ (corresponding to a random hash value) is statistically independent of the underlying committed bit. In particular, a random preimage of such a commitment will consist of a different bit $b'$ with probability roughly $1/2$. See details in \Cref{sec:dcrh_from_stat_hiding}.

%For that, we need to show that if the commitment scheme is secure, then no efficient adversary can come up with a random collision. We do so by way of contradiction, assuming that such an adversary exists and using it to equivocate the commitment scheme. Assume for simplicity that the \DCRH adversary is the (inefficient) optimal algorithm $\Col$ and that the commitment scheme is perfectly hiding. We run $\Col$ on $h$ to get a collision $x_1=(b,r)$ and $x_2=(b',r')$. The fact that both can be used as legal openings is immediate by definition of the hash function, so it remains to show that $b\neq b'$ (which implies that indeed these are two openings for different messages). Recall that $(b,r)$ is chosen uniformly at random and then $(b',r')$ is chosen uniformly at random conditioned on colliding with $(b,r)$. By perfect hiding of the commitment scheme, we can sample $(b',r')$ uniformly at random conditioned on colliding with $(0,r)$. Thus, $b$ and $b'$ are independent and $b$ is uniformly random so the probability that they are the same is roughly $1/2$ which is enough to break the binding of the commitment scheme.

%The full details of the proof where we also get rid of the simplifying assumption mentioned above is given in \Cref{sec:dcrh_from_stat_hiding}. We also remark on what happens if we start with a statistically hiding commitment that works for strings rather than bits.

\subsection{Commitments from SZK Hardness}

We now give an overview of our construction of statistically hiding commitments directly from average-case hardness in $\SZK$. Our starting point is a result of Ong and Vadhan~\cite{OngV08} showing that any promise problem in $\SZK$ has an {\em instance-dependent commitment.} These are commitments that are also parameterized by an instance $x$, such that if $x$ is a {\em yes instance}, they are statistically hiding and if $x$ is a {\em no instance}, they are statistically binding. We construct statistically hiding commitments from instance-dependent commitments for a hard-on-average problem $\Pi=(\Pi_N,\Pi_Y)$ in $\SZK$. 

\paragraph{A first attempt: using zero-knowledge proofs.} To convey the basic idea behind the construction, let us first assume that $\Pi$ satisfies a strong form of average-case hardness where we can efficiently sample no-instances from $\Pi_N$ and yes-instances from $\Pi_Y$ so that the two distributions are computationally indistinguishable. Then a natural protocol for committing to a message $m$ is the following: The receiver $\receiver$ would sample  a yes-instance $x\gets \Pi_Y$, and send it to the sender $\sender$ along with zero-knowledge proof \cite{GMR} that $x$ is indeed a yes-instance. The sender $\sender$ would then commit to $m$ using an $x$-dependent commitment.

To see that the scheme is statistically hiding, we rely on the soundness of the proof which guarantees that $x$ is indeed a yes-instance, and then on the hiding of the instance-dependent scheme. To prove (computational) binding, we  rely on zero knowledge property and the hardness of $\Pi$. Specifically, by zero knowledge, instead of sampling $x$ from $\Pi_Y$, we can sample it from any computationally indistinguishable distribution, without changing the probability that an efficient malicious sender breaks binding. In particular, by the assumed hardness of $\Pi$, we can sample $x$ from $\Pi_N$. Now, however, the instance-dependent commitment guarantees binding, implying that the malicious sender will not be able to equivocate.

The main problem with this construction is that constant-round zero-knowledge proofs (with a negligible soundness error) are only known assuming constant-round statistically hiding commitments \cite{GoldreichKahan}, which is exactly what we are trying to construct.

\paragraph{A second attempt: using witness-indistinguishable proofs.} Instead of relying on zero-knowledge proofs, we rely on the weaker notion of witness-indistinguishable proofs and use the {\em independent-witnesses paradigm} of Feige and Shamir~\cite{FeigeShamir90}. (Indeed such proofs are known for all of $\NP$, based average-case hardness in \SZK~\cite{GMW87,Naor91,OstrovskyW93}, see Section \ref{sec:szk} for details.) We change the previous scheme as follows: the receiver $\receiver$ will now sample {\em two} instances $x_0$ and $x_1$ and provide a witness-indistinguishable proof that at least one of them is a yes-instance. The sender, will secret share the message $m$ into two random messages $m_0,m_1$ such that $m=m_0\oplus m_1$, and return two instance-dependent commitments to $m_0$ and $m_1$ relative to $x_0$ and $x_1$, respectively.

Statistical hiding follows quite similarly to the previous protocol --- by the soundness of the proof one of the instances $x_b$ is a yes-instance, and by the hiding of the $x_b$-dependent commitment, the corresponding share $m_b$ is statistically hidden, and thus so is $m$. To prove binding, we first note that by witness indistinguishability, to prove its statement, the receiver could use $x_b$ for either $b\in\zo$. Then, relying on the hardness of $\Pi$, we can sample $x_{1-b}$ to be a no-instance instead of a yes-instance. If $b$ is chosen at random, the sender cannot predict $b$ better than guessing. At the same time, in order to break binding, the sender must equivocate with respect to at least one of the instance-dependent commitments, and since it cannot equivocate with respect to the no-instance $x_{1-b}$, it cannot break binding unless it can get an advantage in predicting $b$.

\paragraph{Our actual scheme.} The only gap remaining between the scheme just described and our actual scheme is our assumption regarding the strong form of average-case hardness of $\Pi$. In contrast, the standard form of average-case hardness only implies a single samplable distribution $D$, such that given a sample $x$ from $D$ it is hard to tell whether $x$ is a yes-instance or a no-instance better than guessing. 

This requires the following  changes to the protocol. First, lacking a samplable distribution on yes-instances, we consider instead the product distribution $D^n$, as a way to sample {\em weak yes instances} --- $n$-tuples of instances where at least one is a yes-instance in $\Pi_Y$. Unlike before, where everything in the support of the yes-instance sampler was guaranteed to be a yes-instance, now we are only guaranteed that a random tuple is a weak yes instance with overwhelming probability. To deal with this weak guarantee, we add a {\em coin-tossing into the well} phase \cite{GMW87}, where the randomness for sampling an instance from $D^n$ is chosen together by the receiver and sender. We refer the reader to Section \ref{sec:szk} for more details.

\section{Preliminaries}\label{sec:prelims}
Unless stated otherwise, the logarithms in this paper are base 2. For a
distribution $\cD$ we denote by $x \leftarrow \cD$ an element chosen from
$\cD$ uniformly at random. For an integer $n \in \mathbb{N}$ we denote by
$[n]$ the set $\{1,\ldots, n\}$. We denote by $U_n$ the uniform distribution
over $n$-bit strings.
We denote by $\concat$ the string concatenation operation. A function $\negl\colon\N\to\R^+$ is \emph{negligible} if for every constant
$c > 0$, there exists an integer $N_c$ such that $\negl(n) < n^{-c}$ for all
$n > N_c$.
\subsection{Cryptographic Primitives}\label{sec:prelim:uowhf}
A function $f$, with input length $m_1(n)$ and outputs length $m_2(n)$,
specifies for every $n\in \N$ a function
$f_n\colon\bit^{m_1(n)}\to\bit^{m_2(n)}$. We only consider functions with
polynomial input lengths (in $n$) and occasionally abuse notation and write
$f(x)$ rather than $f_n(x)$ for simplicity. The function $f$ is computable in
polynomial time (efficiently computable) if there exists a probabilistic machine that
for any $x \in \bit^ {m_1(n)}$ outputs $f_n(x)$ and runs in time polynomial in
$n$.

A function family ensemble is an infinite set of function families, whose
elements (families) are indexed by the set of integers. Let $\F = \{\F_n\colon
\mathcal D_n\to\mathcal R_n\}_{n\in \N}$ stand for an ensemble of function
families, where each $f\in \F_n$ has domain $\mathcal D_n$ and range $\mathcal
R_n$. An efficient function family ensemble is one that has an efficient
sampling and evaluation algorithms.
\begin{definition}[Efficient function family ensemble]
  A function family ensemble $\F = \{\F_n\colon
\mathcal D_n\to\mathcal R_n\}_{n\in \N}$ is efficient if:
\begin{itemize}
\item $\F$ is samplable in polynomial time: there exists a probabilistic
  polynomial-time machine that given $1^n$, outputs (the description of) a
  uniform element in $\F_n$.
\item There exists a deterministic algorithm that given $x \in \mathcal D_n$ and
  (a description of) $f \in \mathcal F_n$, runs in time $\poly(n, |x|)$ and
  outputs $f(x)$.
\end{itemize}
\end{definition}

\subsection{Distance and Entropy Measures}
\begin{definition}[Statistical distance]
  The \emph{statistical distance} between two random variables $X,Y$ over a
  finite domain $\Omega$, is defined by
  \begin{align*}
    \SD(X,Y) \triangleq \frac{1}{2}\cdot \sum_{x\in \Omega}^{} \left|{\pr{X=x} -
    \pr{Y=x}}\right|.
  \end{align*}
  We say that $X$ and $Y$ are $\delta$-close (resp.\ -far) if
  $\SD(X,Y)\leq \delta$ (resp.\ $\SD(X,Y)\geq \delta$).
\end{definition}

\paragraph{Entropy.} Let $X$ be a random variable. For any $x\in \supp(X)$, the sample-entropy of $x$ with respect to $X$ is
\begin{align*}
 \ShanEnt_X(x) = \log \left(\frac{1}{\pr{X=x}}\right).
\end{align*} 
The Shannon entropy of $X$ is defined as:
\begin{align*}
 \ShanEnt(X) = \EE{x\leftarrow X}{\ShanEnt_X(x)}.
\end{align*}
%
%\item The min-entropy of $X$ is defined as:
%\begin{align*}
% \MinEnt(X) = \min_{x\in \supp(X)}{\ShanEnt_X(x)}.
%\end{align*}
% 
%\item The max-entropy of $X$ is defined as:
%\begin{align*}
% \MaxEnt(X) = \log |\supp(X)|.
%\end{align*}
%  
%\end{itemize}

%The following fact is immediate from the definitions.
%\begin{fact} 
%  It holds that $\MinEnt(X) \le \ShanEnt(X) \le \MaxEnt(X)$. Moreover, %equality holds if
% and only if $X$ is uniform over a subset of its domain.
%\end{fact}

\paragraph{Conditional entropy.} Let $(X,Y)$ be a jointly distributed random
variable.
\begin{itemize}
\item  For any $(x,y)\in \supp(X,Y)$, the conditional sample-entropy to be
\begin{align*}
 \ShanEnt_{X\mid Y}(x \mid y) = \log \left(\frac{1}{\pr{X=x \mid Y=y}}\right).
\end{align*}

\item The conditional Shannon entropy is 
\begin{align*}
 \ShanEnt(X\mid Y) = \EE{(x,y)\leftarrow (X,Y)}{\ShanEnt_{X\mid Y}(x \mid y)} =
 \EE{y\leftarrow Y}{\ShanEnt(X|_{Y=y})} = \ShanEnt(X,Y) - \ShanEnt(Y).
\end{align*}
 
\end{itemize}

\paragraph{Relative entropy.} We also use basic facts about relative entropy
(\aka, Kullback-Leibler divergence). 

 \begin{definition}[Relative entropy]
   Let $X$ and $Y$ be two random variables over a finite domain $\Omega$.  The \emph{relative entropy}   is 
   \begin{align*}
     \KL(X \| Y) = \sum_{x\in\Omega}\pr{X=x}\cdot \log\left( 
     \frac{\pr{X=x}}{\pr{Y=x}}\right).
   \end{align*}
 \end{definition}

 \begin{proposition}[Chain rule]\label{prop:kl_chain_rule}
   Let $(X_1,X_2)$ and $(Y_1,Y_2)$ be random variables. It holds that 
   \begin{align*}
     \KL((X_1,X_2)\|(Y_1,Y_2)) =  \KL(X_1 \| Y_1) + \EE{x\leftarrow X_1}{\KL(X_2|_{X_1=x} \| Y_2|_{Y_1=x})}.
   \end{align*}
 \end{proposition}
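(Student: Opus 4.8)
The plan is to prove this standard chain rule by directly expanding the definition of relative entropy and exploiting the factorization of each joint distribution into a marginal and a conditional. The starting point is
\[
  \KL((X_1,X_2)\|(Y_1,Y_2)) = \sum_{x_1,x_2}\pr{X_1=x_1,X_2=x_2}\cdot\log\frac{\pr{X_1=x_1,X_2=x_2}}{\pr{Y_1=x_1,Y_2=x_2}}.
\]
I would then substitute $\pr{X_1=x_1,X_2=x_2}=\pr{X_1=x_1}\cdot\pr{X_2=x_2\mid X_1=x_1}$ and the analogous identity for $(Y_1,Y_2)$, so that the argument of the logarithm becomes a product of a first-coordinate marginal ratio and a conditional ratio. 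Since $\log$ turns this product into a sum, the single sum splits into two sums, which I claim are exactly the two summands on the right-hand side.

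For the first piece, the summand is $\pr{X_1=x_1}\pr{X_2=x_2\mid X_1=x_1}\log\frac{\pr{X_1=x_1}}{\pr{Y_1=x_1}}$, whose logarithmic factor does not depend on $x_2$; performing the sum over $x_2$ first and using $\sum_{x_2}\pr{X_2=x_2\mid X_1=x_1}=1$ collapses it to $\KL(X_1\|Y_1)$. For the second piece, fixing $x_1$ and summing over $x_2$ yields, by definition, $\KL(X_2|_{X_1=x_1}\|Y_2|_{Y_1=x_1})$, and the outer weighting by $\pr{X_1=x_1}$ turns the remaining sum over $x_1$ into $\EE{x_1\leftarrow X_1}{\KL(X_2|_{X_1=x_1}\|Y_2|_{Y_1=x_1})}$. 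Adding the two pieces gives the claimed identity.

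The only point requiring care --- and the one I would treat as the sole (mild) obstacle --- is the bookkeeping of supports together with the zero-probability conventions. The identity is meaningful only when the relevant relative entropies are finite, i.e.\ when $\supp(X_1,X_2)\subseteq\supp(Y_1,Y_2)$, and one adopts the usual conventions $0\log 0=0$ and $0\log\frac{0}{0}=0$. With these in place, every $(x_1,x_2)$ that contributes to the outer sum satisfies $\pr{X_1=x_1}>0$ and $\pr{X_2=x_2\mid X_1=x_1}>0$ (and likewise for the $Y$-probabilities), so the factorization and the splitting of the logarithm are valid term by term and no expression is ever undefined. Everything else is the one-line rearrangement described above.
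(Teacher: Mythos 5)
Your proof is correct: it is the standard derivation of the KL chain rule by factoring each joint distribution into marginal and conditional, splitting the logarithm, and collapsing the inner sums, and your handling of the support and $0\log 0$ conventions is the right caveat (one can also note that when $\supp(X_1,X_2)\not\subseteq\supp(Y_1,Y_2)$ both sides equal $+\infty$, so the identity holds unconditionally). The paper states this proposition as a known fact without proof, so there is no alternative argument to compare against; your write-up is exactly what a self-contained proof would look like.
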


A well-known relation between statistical distance and relative entropy is
 given by Pinsker's inequality.
 \begin{proposition}[Pinsker's inequality]\label{prop:pinsker} 
   For any two random variables $X$ and $Y$ over a finite domain it holds
   that
   \begin{align*}
     \SD(X,Y) \le \sqrt{\frac{\ln{2}}{2}\cdot \KL(X \| 
     Y)}.  
   \end{align*}
 \end{proposition}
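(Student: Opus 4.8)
The plan is to prove Pinsker's inequality by the standard two-step route: first reduce the general statement to the two-point (Bernoulli) case by coarsening the domain, and then settle the two-point case by an elementary calculus argument. Since the definition of $\KL$ in the excerpt uses logarithms to base $2$, it is cleanest to carry out the binary computation in natural logarithms and absorb the resulting $\ln 2$ factor at the very end; this is exactly where the constant $\ln 2/2$ on the right-hand side comes from.

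For the reduction, write $X,Y$ for the two distributions on $\Omega$ and set $A = \set{x\in\Omega : \pr{X=x}\ge \pr{Y=x}}$. By the definition of statistical distance, the positive part of $\pr{X=x}-\pr{Y=x}$ is concentrated on $A$, so $\SD(X,Y) = \pr{X\in A} - \pr{Y\in A} =: p-q$, where $p=\pr{X\in A}$, $q = \pr{Y \in A}$, and $p \ge q$. I would then invoke the log-sum inequality separately on $A$ and on $\Omega\setminus A$, namely $\sum_{x\in A}\pr{X=x}\log\frac{\pr{X=x}}{\pr{Y=x}} \ge p\log\frac{p}{q}$ together with the analogous bound on the complement, and add the two. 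This shows that coarsening $\Omega$ to the partition $\set{A,\Omega\setminus A}$ only decreases relative entropy, giving
\[
\KL(X\|Y) \ge p\log\frac{p}{q} + (1-p)\log\frac{1-p}{1-q}.
\]
The log-sum inequality itself follows from the convexity of $t\mapsto t\log t$ via Jensen, which I would either cite or prove in one line.

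It then remains to prove the two-point bound $d(p\|q) := p\ln\frac{p}{q}+(1-p)\ln\frac{1-p}{1-q} \ge 2(p-q)^2$ in natural logarithms. Fixing $q$ and setting $g(p) = d(p\|q) - 2(p-q)^2$, a direct computation gives $g(q)=0$, $g'(q)=0$, and $g''(p) = \frac{1}{p(1-p)} - 4 \ge 0$, where the last inequality uses $p(1-p)\le 1/4$ for $p\in[0,1]$. Hence $g$ is convex with its minimum at $p=q$, so $g\ge 0$ throughout. Combining this with the reduction and converting the base-$2$ binary divergence to nats (a factor of $1/\ln 2$) yields $\KL(X\|Y) \ge \frac{2}{\ln 2}\cdot \SD(X,Y)^2$, which rearranges to the claimed $\SD(X,Y)\le \sqrt{\tfrac{\ln 2}{2}\cdot \KL(X\|Y)}$.

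I expect the main obstacle to be the reduction step rather than the binary computation: one must justify that passing to the two-cell partition can only shrink $\KL$ (the data-processing/log-sum step), and handle the degenerate cases cleanly. In particular, if $q=\pr{Y\in A}=0$ while $p>0$, then $\KL(X\|Y)=\infty$ and the inequality is trivial, while the boundary cases $p\in\set{0,1}$ are covered by taking limits in the convexity argument; these are the only points where care is needed.
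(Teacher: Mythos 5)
Your proof is correct, and it is the standard textbook argument for Pinsker's inequality: reduce to the binary case via the two-cell partition $\set{A,\Omega\setminus A}$ using the log-sum inequality, then establish $d(p\|q)\ge 2(p-q)^2$ in nats by the second-derivative computation, with the $\ln 2$ factor accounting for the base-$2$ logarithm in the paper's definition of $\KL$. Note, however, that the paper does not prove this proposition at all --- it is stated as a known fact and used as a black box (in the proof of Lemma~\ref{lemma:accessible}) --- so there is no proof in the paper to compare against; your self-contained derivation, including the handling of the degenerate case $\pr{Y\in A}=0$, is complete and the constant $\sqrt{\ln 2/2}$ matches the statement exactly.
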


 Another useful inequality is Jensen's inequality.
 \begin{proposition}[Jensen's inequality]\label{prop:jensen}
   If $X$ is a random variable and $f$ is concave, then
   \begin{align*}
     \E{f(X)} \leq f(\E{X}).
   \end{align*}
 \end{proposition}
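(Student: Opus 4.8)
The plan is to prove Jensen's inequality via the \emph{supporting line} characterization of concavity. First I would set $\mu = \E{X}$ and note (as is implicit in the statement) that $\mu$ lies in the interior of the domain of $f$, so that $f$ is finite and continuous in a neighborhood of $\mu$. The single geometric fact I would invoke is that a concave function admits, at every interior point, a supporting line lying \emph{above} its graph: there exists a slope $a \in \R$ such that
\begin{align*}
  f(x) \le f(\mu) + a\cdot(x - \mu) \qquad \text{for every } x \text{ in the domain.}
\end{align*}

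Granting this pointwise bound, the conclusion is immediate. Since the inequality holds for every value of $x$, it holds after substituting the random variable $X$; taking expectations and using monotonicity and linearity gives
\begin{align*}
  \E{f(X)} \le \E{f(\mu) + a\cdot(X - \mu)} = f(\mu) + a\cdot\brac{\E{X} - \mu} = f(\mu),
\end{align*}
where the linear correction vanishes because $\E{X} - \mu = 0$ by the choice of $\mu$. As $f(\mu) = f(\E{X})$, this is exactly the claimed inequality $\E{f(X)} \le f(\E{X})$.

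The main obstacle is justifying the existence of the supporting line, since $f$ is assumed only concave and need not be differentiable. I would handle this by recalling the standard fact that a concave function on an interval has one-sided derivatives $f'_+(\mu)$ and $f'_-(\mu)$ at every interior point, with $f'_+(\mu) \le f'_-(\mu)$, and that \emph{any} slope $a$ in the nonempty interval $[\,f'_+(\mu),\, f'_-(\mu)\,]$ yields a valid supporting line; this is a direct consequence of the monotonicity of secant-line slopes, which is precisely the defining feature of concavity. If one prefers to avoid one-sided derivatives altogether, an alternative is to first establish the discrete case by induction on the support size using only the two-point definition $f(\lambda x + (1-\lambda) y) \ge \lambda f(x) + (1-\lambda) f(y)$, and then pass to an arbitrary $X$ by approximating it with finitely supported variables. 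I expect the supporting-line route to be the cleaner of the two, as it handles continuous and discrete $X$ uniformly and isolates the only nontrivial ingredient into a single standard lemma about concave functions.
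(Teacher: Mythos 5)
Your proof is correct: the supporting-line argument is the canonical proof of Jensen's inequality, and your handling of non-differentiability via one-sided derivatives (any slope in $[f'_+(\mu), f'_-(\mu)]$ supports the graph) is the standard and valid way to justify the key lemma. Note that the paper itself states this proposition as a known fact with no proof at all, so there is nothing to compare against; your write-up would serve as a perfectly adequate proof, with the only minor caveat being the boundary case where $\E{X}$ lies on the edge of the domain (e.g., $f = \sqrt{\cdot}$ and $X = 0$ almost surely), which is degenerate and trivially fine in all of the paper's applications.
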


% We will use another (less well-known) distance measure called the
% \emph{triangular discrimination} (a.k.a Le Cam Divergence).
% \begin{definition}[Triangular discrimination]\label{def:td}
%   The \textsf{triangular discrimination} between two random variables $X,Y$ over
%   a finite domain $\Omega$, is defined by
%   \begin{align*}
%     \TD(X,Y) = \sum_{x\in \Omega} \frac{(\Pr[X=x] - \Pr[Y=x])^2}{\Pr[X=x] + \Pr[Y=x]}
%   \end{align*}
% \end{definition}
% It is known that the triangular discrimination is bounded from above by
% the statistical distance and from below by the statistical
% distance squared (see, for example, \cite[Eq.~(2.11)]{Topsoe00}). 
% \begin{proposition} \label{lemma:sd_td}
% For any two random variables $X,Y$
%   over the same finite domain, it holds that
%   \begin{align*}
%     2\cdot\SD(X, Y)^2 \leq  \TD(X,Y) \leq 2\cdot\SD(X, Y).
%   \end{align*}
% \end{proposition}

\subsection{Commitment Schemes}\label{sec:prelim:commit}

A commitment scheme is a two-stage interactive protocol between a sender
$\sender$ and a receiver $\receiver$. The goal of such a scheme is that after
the first stage of the protocol, called the commit protocol, the sender is bound
to at most one value. In the second stage, called the opening protocol, the
sender opens its committed value to the receiver. Here, we are interested in statistically hiding and computationally binding commitments. Also, for simplicity, we restrict our attention to protocols that can be used to commit to bits (i.e., strings of length 1). 

In more detail, a commitment scheme   is defined
via a pair of probabilistic polynomial-time algorithms $(\sender,
\receiver, \verifier)$ such that:
\begin{itemize}
\item The commit protocol: $\sender$ receives as input the security parameter
  $1^n$ and a bit $b\in \bit$. $\receiver$ receives as input the
  security parameter $1^n$. At the end of this stage, $\sender$ outputs
  $\decom$ (the  decommitment) and $\receiver$
  outputs $\com$ (the commitment).
\item The verification: $\verifier$ receives as input the security
  parameter $1^n$, a commitment $\com$, a
  decommitment $\decom$, and outputs either a bit $b$ or $\bot$.
\end{itemize}

A commitment scheme is {\em public coin} if all messages sent by the receiver are independent random coins. 

Denote by $(\decom,\com)\leftarrow \langle \sender (1^n, b),
\receiver \rangle$ the experiment in which $\sender$ and $\receiver$ interact
with the given inputs and uniformly random coins, and eventually $\sender$
outputs a decommitment string and $\receiver$ outputs a
commitment. The completeness of the protocol says that for all $n\in \N$, every
$b\in\bit$, and every tuple $(\decom,\com)$ in
the support of $\langle \sender (1^n, b), \receiver \rangle$, it holds that $\verifier (\decom,\com) = b$. Unless otherwise stated, $\verifier$ is the canonical verifier that receives the sender's coins as part of the decommitment and checks their consistency with the transcript.

Below we define two security properties one can require from a commitment
scheme. The properties we list are \emph{statistical-hiding} and
\emph{computational-binding}. These roughly say that after the commit stage, the
sender is \emph{bound} to a specific value but the receiver cannot know this
value.

\begin{definition}[binding]
  A commitment scheme $(\sender, \receiver, \verifier)$ is
  binding if for every probabilistic polynomial-time adversary $\sender^*$
  there exits a negligible function $\negl(n)$ such that
  \begin{align*}
    \pr{\MyAtop{ \verifier(\decom,\com)=0
        \text{ and }}{\verifier(\decom',\com)=1 } \;:\; (\decom,\decom',\com) \leftarrow \langle \sender^* (1^n),
        \receiver\rangle}\leq \negl(n)
  \end{align*}
  for all  $n\in \N$, where the probability is taken over the random
  coins of both $\sender^*$ and $\receiver$.
\end{definition}

Given a commitment scheme $(\sender, \receiver,
\verifier)$ and an adversary $\receiver^*$, we denote by $\view_{\langle
  \sender(b), \receiver^*\rangle}(n)$ the distribution on the view of
$\receiver^*$ when interacting with $\sender(1^n, b)$. The view consists of
$\receiver^*$'s random coins and the sequence of messages it received from
$\sender$. The distribution is taken over the random coins of both $\sender$ and
$\receiver$. Without loss of generality, whenever $\receiver^*$ has no
computational restrictions, we can assume it is deterministic.
\begin{definition}[hiding]
  A commitment scheme $(\sender, \receiver, \verifier)$ is statistically hiding if there exists a negligible function $\negl(n)$ such that for every (deterministic) adversary $\receiver^*$
  %and every distinct $s_0,s_1\in \bit^\ell$,
  it holds that
  \begin{align*}
    \SD\left(\{\view_{\langle \sender(0),
    \receiver^*\rangle}(n)\}, \{\view_{\langle \sender(1),
    \receiver^*\rangle}(n)\}\right) \leq \negl(n)
  \end{align*}
  for all  $n\in \N$.
\end{definition}

%\paragraph{Other properties.}\nir{I suggest to remove this remark, it's %rrelvant to the paper} In this work we are mostly interested in the round %complexity of commitment protocol. More specifically, we are interested in %protocols that work in a constant number of rounds. Note that there are %additional features of interest for commitment schemes. One such feature is %\emph{succinctness} which says that the commitment scheme allows to commit %on a very long string yet the commitment itself is (much) shorter than that. %Another feature is called \emph{local opening} and it says that one can %verify/open a small part of the committed string without revealing the whole %input. The applications of the latter two features are beyond the scope of %this work, but see~\cite{KomargodskiNY18,BitanskyKP17} for some examples.

\subsection{Distributional Collision Resistant Hash Functions}\label{sec:dcrh}
Roughly speaking, a distributional collision resistant hash function \cite{DubrovI06} guarantees that no efficient adversary can sample a uniformly random
collision. We start by defining more precisely what we mean by a random collision throughout the paper, and then move to the actual definition.

%This relaxation of classical collision resistance was introduced by Dubrov and Ishai~.

\begin{definition}[Ideal collision finder]
 Let $\Col$ be the random function that given a (description) of a function $h\colon\bit^{n}\to\bit^m$ as input, returns a collision  $(x_1,x_2)$ \wrt $h$ as follows: it samples a uniformly random element, $x_1 \leftarrow 
\bit^{n}$,  and then samples a uniformly random element that collides with $x_1$ under $h$, $x_2 \leftarrow \{x \in
\bit^{n} \colon h(x)=h(x_1)\}$. (Note that possibly, $x_1=x_2$.)
\end{definition}

\begin{definition}[Distributional collision resistant hashing]\label{def:dcrh}
  Let $\cH = \{ \cH_n\colon\allowbreak \bit^{n} \to \bit^{m(n)} \}_{n\in \N}$ be
  an efficient function family ensemble.  We say that $\cH$ is a
   secure \emph{distributional collision resistant hash} ($\DCRH$) function
  family if there exists a polynomial $p(\cdot)$ such that for any probabilistic polynomial-time algorithm $\sA$, it holds that
  \begin{align*}
    \SD\left((h,\sA(1^n,h)),(h,\Col(h))\right) \ge  \frac{1}{p(n)},
  \end{align*}
  for $h\leftarrow \cH_n$ and large enough $n\in\N$.
  \end{definition}

\paragraph{Comparison with the previous definition.} Our definition deviates from the previous definition of distributional collision resistance considered in~\cite{DubrovI06,HarnikN10,KomargodskiY18}. The definition in the above-mentioned works is equivalent to requiring that for any efficient adversary $\sA$, there exists a polynomial $p_{\sA}$, such that the collision output by $\sA$ is $\frac{1}{p_{\sA}(n)}$-far from a random collision on average (over $h$). Our definition switches the order of quantifiers, requiring that there is one such polynomial $p(\cdot)$ for all adversaries $\sA$.

We note that the previous definition is, in fact, not even known to imply one-way functions. In contrast, the definition presented here strengthens that of {\em distributional one-way functions}, which in turn implies one-way functions~\cite{ImpagliazzoL89}. 
Additionally, note that both constructions of distributional collision resistance in~\cite{KomargodskiY18} (from multi-collision resistance and from \SZK~hardness) satisfy our stronger notion of security (with a similar proof).

\paragraph{On compression.} As opposed to classical notions of collision resistance (such as plain collision resistance or multi-collision resistance), it makes sense to require distributional collision resistance even for \emph{non-compressing} functions. So we do not put a restriction on the order between $n$ and $m(n)$. As a matter of fact, by padding, the input, arbitrary polynomial compression can be assumed without loss of generality.

%%% Local Variables:
%%% TeX-master: "main.tex"
%%% End:

\section{From \DCRH to Statistically Hiding Commitments and Back}

We show distributional collision resistant hash functions imply constant-round statistically hiding commitments. 
\begin{theorem}\label{thm:stat_hide_dcrh}
  Assume the existence of a distributional collision resistant hash function family. Then, there exists a constant-round statistically hiding and
  computationally binding commitment scheme. 
\end{theorem}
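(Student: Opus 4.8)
The plan is to prove \Cref{thm:stat_hide_dcrh} by composing two ingredients: the two-block inaccessible entropy generator guaranteed by \Cref{thm:dcrhtpIAE}, and the transformation of Haitner, Reingold, Vadhan and Wee~\cite{HaitnerRVW09,HaitnerReVaWe18} that turns an inaccessible entropy generator into a statistically hiding (and computationally binding) commitment. The single most important point, which is exactly what buys us the \emph{constant} round complexity, is that \Cref{thm:dcrhtpIAE} produces a generator with only \emph{two} blocks, a constant, whereas the generators one extracts from a plain one-way function have $\poly(\secparam)$ blocks and hence yield commitments with a super-constant (in fact quasi-linear) number of rounds.

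Concretely, I would first fix the block generator $\Gc$ of \Cref{thm:dcrhtpIAE}: a hash function $h\gets\cH_\secparam$ is sampled as a public parameter, and $\Gc$ then samples $x\gets\zo^\secparam$ and outputs the first block $y=h(x)$ followed by the second block $x$. Since the whole output $(y,x)$ is a deterministic function of the uniform string $x$, the real (Shannon) entropy of $\Gc$ is exactly $\secparam$. The content of \Cref{thm:dcrhtpIAE} is the matching upper bound on the accessible Shannon entropy of $\Gc$, namely $\secparam-1/p(\secparam)$ for some polynomial $p$: no efficient $\Gc$-consistent generator can realize more than this much entropy across its two blocks, for otherwise one could sample collisions close to $\Col(h)$ and violate the \DCRH guarantee. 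Thus $\Gc$ exhibits a noticeable gap between real and accessible entropy, which is precisely the object consumed by the inaccessible-entropy route to statistically hiding commitments.

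Next I would feed $\Gc$ into the machinery of~\cite{HaitnerReVaWe18}. The per-copy gap of $1/p(\secparam)$ is first amplified by passing to a $t$-fold direct product $\Gc^{\otimes t}$ for a suitable polynomial $t=t(\secparam)$; this keeps the number of blocks equal to two (each block becoming a $t$-tuple), scales the real entropy by $t$, and, because the entropy gap amplifies under independent repetition (as established in~\cite{HaitnerReVaWe18}), drives the gap as high as needed without affecting the round count. Applying their transformation to the amplified two-block generator then yields a commitment scheme in which statistical hiding and computational binding follow from the real-entropy and accessible-entropy bounds respectively (an efficient sender breaking binding would constitute a $\Gc$-consistent generator exceeding the accessible-entropy bound, contradicting the \DCRH), and whose number of rounds is a constant multiple of the number of blocks, i.e.\ $O(1)$. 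This establishes \Cref{thm:stat_hide_dcrh}.

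Given \Cref{thm:dcrhtpIAE}, the remaining work here is essentially bookkeeping, so the genuine difficulty lies in that lemma rather than in this composition. Specifically, the hard part is proving the accessible-entropy bound $\secparam-1/p(\secparam)$ for \emph{every} \DCRH, not only the clean regular-and-onto case sketched in the overview, and doing so for \emph{Shannon} accessible entropy: one cannot argue via max-entropy, since the max-entropy of $\Gc$ can be as large as $\secparam$ whenever the adversary can reach the full set of collisions (albeit from a skewed distribution). Within the present proof, the only care required is to match parameters across the citation boundary — choosing the repetition count $t$ as a function of the polynomial $p$ so that the amplified gap meets the hypothesis of~\cite{HaitnerReVaWe18}, and checking that the resulting hiding and binding errors are negligible while the round complexity stays constant.
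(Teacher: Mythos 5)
Your proposal takes essentially the same route as the paper: the identical two-block generator $\Gc(h,x)=(h(x),\,x)$ with $h$ as public parameter, the observation that its real Shannon entropy is $\secparam$, the accessible Shannon-entropy upper bound $\secparam-1/\poly(\secparam)$, and the invocation of the Haitner--Reingold--Vadhan--Wee transformation \cite{HaitnerReVaWe18} whose round complexity is governed by the (constant) number of blocks. You correctly identify that all the real work sits in the accessible-entropy bound for an arbitrary (possibly non-regular, non-compressing) \DCRH --- the paper's Lemma~\ref{lemma:accessible}, proved via the KL chain rule and Pinsker's inequality --- which your write-up defers to the cited lemma rather than proving.
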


Our proof relies on the transformation of Haitner et al.~\cite{HaitnerRVW09,HaitnerReVaWe18}, translating inaccessible-entropy generators to statistically hiding commitments. Concretely, we construct appropriate inaccessible-entropy generators from distributional collision resistant hash functions.
In \Cref{sec:prelim:inaccessibe}, we recall the necessary definitions and the result of \cite{HaitnerReVaWe18}, and then in \Cref{our-generator}, we prove Theorem~\ref{thm:stat_hide_dcrh}. 

We complement the above result by showing a loose converse to \Cref{thm:stat_hide_dcrh}, namely that two message statistically 
hiding commitments (with possibly large communication) imply the existence of 
distributional collision resistance hashing.
\begin{theorem}\label{thm:dcrh_from_stat_hiding}
Assume the existence of a binding and statistically hiding  two-message commitment scheme. Then, there exists a \DCRH function family.
\end{theorem}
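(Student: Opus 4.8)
The plan is to turn the commitment protocol directly into a hash family, as sketched in the technical overview, and then show that a random collision necessarily equivocates the commitment. Write $(\sender,\receiver,\verifier)$ for the two-message scheme, let $h \leftarrow \receiver(1^n)$ be the receiver's (single) message, and let $\rho = \rho(n)$ be the length of the sender's randomness. Define the family $\cH_n$ by sampling $h \leftarrow \receiver(1^n)$ and setting, for $x = (b,r) \in \bit \times \bit^\rho$,
\[
h(x) = \sender(h, b; r),
\]
i.e.\ $h(x)$ is the sender's commitment to the bit $b$ under coins $r$. This is an efficient function family ensemble (sampling $h$ amounts to running $\receiver$, and evaluation to running $\sender$); after a routine padding of the input it has input length matching the security parameter, as required by \Cref{def:dcrh}.

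I would prove \DCRH security by a reduction to binding: from an adversary $\sA$ that samples collisions close to $\Col(h)$ I build an adversary $\sender^*$ that breaks binding. Given the receiver's message $h$, $\sender^*$ runs $\sA(1^n,h)$ to obtain a pair $((b,r),(b',r'))$, sends the commitment $\com = \sender(h,b;r)$, and in the opening phase returns the two decommitments $(b,r)$ and $(b',r')$. Whenever $\sA$'s output is a genuine collision (so that $\sender(h,b;r)=\sender(h,b';r')=\com$) and $b \neq b'$, the canonical verifier accepts $(b,r)$ as an opening to $b$ and $(b',r')$ as an opening to $b'$, so $\sender^*$ opens $\com$ to both $0$ and $1$. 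It therefore suffices to lower bound the probability of the event $E$ that $\sA$ outputs a collision with $b \neq b'$.

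The heart of the argument --- and the step I expect to be the main obstacle --- is to show that under the ideal distribution $\Col(h)$ the two preimages carry different first bits with probability essentially $1/2$. Fix $h$ and for a commitment value $c$ let $N_\beta(c) = |\{r : \sender(h,\beta;r)=c\}|$ for $\beta\in\bit$. Conditioned on $c = h(x_1)$, the ideal finder makes $b_1,b_2$ i.i.d.\ with $\pr{\beta \mid c} = N_\beta(c)/(N_0(c)+N_1(c))$, so writing $p_c = N_0(c)/(N_0(c)+N_1(c))$ one gets $\pr{b_1\neq b_2 \mid c} = 2p_c(1-p_c) = \tfrac12 - 2(p_c-\tfrac12)^2$. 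Statistical hiding enters precisely here: instantiating the hiding definition with the deterministic receiver that sends the fixed message $h$ gives $\SD(\sender(h,0),\sender(h,1)) \le \negl(n)$, and a short computation shows $\EE{c}{|p_c - \tfrac12|} = \tfrac12\SD(\sender(h,0),\sender(h,1)) \le \negl(n)/2$, where $c$ is distributed as $h(x_1)$ for uniform $x_1$. Since $(p_c-\tfrac12)^2 \le \tfrac12|p_c-\tfrac12|$, this yields $\prob{\Col(h)}{b_1 \neq b_2} \ge \tfrac12 - \negl(n)/2$ for every $h$. The subtle point to get right is that hiding must be invoked per first message $h$; the reduction to deterministic receivers in \Cref{sec:prelim:commit} is what makes the same negligible bound apply simultaneously to all $h$ in the support of $\receiver(1^n)$.

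Finally I would match the quantifiers of \Cref{def:dcrh}, which asks for a single universal polynomial, and claim the constant $p \equiv 5$ works. If not, some PPT $\sA$ achieves $\SD((h,\sA(1^n,h)),(h,\Col(h))) < 1/5$ for infinitely many $n$. Since $\Col(h)$ always outputs a collision, the event $E$ has probability $\ge \tfrac12 - \negl(n)/2$ under $(h,\Col(h))$, and being an event its probability under $(h,\sA(1^n,h))$ drops by at most the statistical distance; hence $\sender^*$ breaks binding with probability at least $\tfrac12 - \negl(n)/2 - 1/5 > 1/4$ for infinitely many $n$, contradicting computational binding. Thus $\cH$ is a \DCRH. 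The only genuinely delicate ingredients are the hiding-to-$\tfrac12$ computation above and the bookkeeping that keeps the polynomial universal; the rest is a direct simulation.
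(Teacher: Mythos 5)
Your proposal is correct and follows essentially the same route as the paper: the identical construction $h(x)=\sender(h,b;r)$ with $h\gets\receiver(1^n)$, the same reduction turning a collision sampler into an equivocating sender, and the same key claim that under $\Col(h)$ the two preimages carry distinct bits with probability $1/2-\negl(n)$, derived from statistical hiding against the deterministic receiver that sends the fixed $h$. The only difference is cosmetic: you bound $\EE{c}{(p_c-\tfrac12)^2}$ directly via $\EE{c}{|p_c-\tfrac12|}=\tfrac12\SD(\sender(h,0),\sender(h,1))$, whereas the paper applies Markov's inequality to $\EE{c}{\Delta(B_c,B)}$; both yield the same conclusion.
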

This proof of \Cref{thm:dcrh_from_stat_hiding} appears in \Cref{sec:dcrh_from_stat_hiding}.

\subsection{Preliminaries on Inaccessible Entropy Generators}\label{sec:prelim:inaccessibe}
The following  definitions of real and accessible entropy of protocols are taken from \cite{HaitnerReVaWe18}.
\begin{definition}[Block generators]\label{def:blockGenrator}
	Let $n$ be a security parameter, and let $\pp=\pp(n)$, $s=s(n)$ and $m=m(n)$. An {\sf $m$-block generator} is a function $\Gc \colon \zo^{\pp} \times \zo^{s} \mapsto (\zs)^{m}$. It is {\sf efficient} if its running time on input of length $\pp(n)+ s(n)$ is polynomial in $n$.
	
	%let $\ell(n) = \sum_{i\in [m(n)]} \ell_i(n)$, and
	We   call parameter $n$ the {\sf security parameter}, $\pp$ the {\sf public parameter length}, $s$ the {\sf seed length},  $m$  the  {\sf number of blocks}, and  $\ell(n) = \max_{(z,x)\in \zo^{\pp(n)} \times \zo^{s(n)},i\in [m(n)]} \size{\Gc(z,x)_i}$ the {\sf maximal block length} of $\Gc$.  
\end{definition}

\begin{definition}[Real sample-entropy]\label{def:RealSamEnt}
Let $\Gc$ be an $m$-block generator over $\zo^{\pp} \times \zo^{s}$,  let $n\in\N$, let  $Z_n$ and $X_n$ be uniformly distributed over $\zo^{\pp(n)}$ and $\zo^{s(n)}$, respectively, and let $\vY_n= (Y_{1},\ldots,Y_m) =  \Gc(Z_n, X_n)$.  For $n\in\N$ and $i\in [m(n)]$, define the {\sf real sample-entropy of $\vy\in \Supp(Y_{1},\ldots,Y_i)$ given $z\in\Supp(Z_n)$} as
	$$\Hrealsam_{\Gc,n}(\vy| z) = \sum_{j=1}^i \ShanEnt_{Y_j|Z_n,Y_{<j}}(\vy_j|z,\vy_{<j}).$$
\end{definition}
\noindent We omit the security parameter from the above notation when clear from the context.

\begin{definition}[Real entropy]
	Let $\Gc$ be an $m$-block generator, and  let  $Z_n$ and $\vY_n$ be as in Definition~\ref{def:RealSamEnt}. Generator $\Gc$ has {\sf real entropy at least $k = k(n)$}, if
	$$\EE{(z,\vy) \getsr (Z_n,\vY_n)}{\Hrealsam_{\Gc,n}(\vy| z)} \geq k(n)$$
	for every $n\in \N$.
	
	The generator $\Gc$ has {\sf real min-entropy at least $k(n)$ in its \ith block} for some $i = i(n)\in [m(n)]$, if
	$$\prob{(z,\vy) \getsr (Z_n,\vY_n)}{\ShanEnt_{Y_i|Z_n,Y_{<i}}(\vy_i|z,\vy_{<i}) < k(n)} = \negl(n).$$

We say the above bounds	 are {\sf invariant to the public parameter} if they hold for any fixing of the public parameter $Z_n$.\footnote{In particular, this is the case when there is no public parameter, \ie $\pp= 0$.}
\end{definition}

It is known that the real Shannon entropy amounts to measuring the
standard conditional Shannon entropy of $\Gc$'s output blocks.
\begin{lemma}[{\cite[Lemma 3.4]{HaitnerReVaWe18}}]
	Let  $\Gc$,  $Z_n$ and $\vY_n$ be as in \cref{def:RealSamEnt} for some $n\in \N$, then
	$$\EE{(z,\vy) \getsr (Z_n,\vY_n)}{\Hrealsam_{\Gc,n}(\vy|z)} =  \ShanEnt(\vY_n|Z_n).$$
\end{lemma}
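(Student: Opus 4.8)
The plan is to expand the left-hand side directly from the definition of the real sample-entropy and then collapse it using three ingredients that are all available in the preliminaries: linearity of expectation, the definition of conditional Shannon entropy as the expectation of the conditional sample-entropy, and the chain rule for Shannon entropy. Since $\vy \getsr \vY_n$ ranges over all $m$ blocks, \cref{def:RealSamEnt} gives $\Hrealsam_{\Gc,n}(\vy|z) = \sum_{j=1}^m \ShanEnt_{Y_j|Z_n,Y_{<j}}(\vy_j|z,\vy_{<j})$, so by linearity of expectation
$$\EE{(z,\vy) \getsr (Z_n,\vY_n)}{\Hrealsam_{\Gc,n}(\vy|z)} = \sum_{j=1}^m \EE{(z,\vy) \getsr (Z_n,\vY_n)}{\ShanEnt_{Y_j|Z_n,Y_{<j}}(\vy_j|z,\vy_{<j})}.$$

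First I would note that the $j$-th summand depends only on the coordinates $(z,\vy_{\le j})$, so the expectation over the full transcript $(Z_n,\vY_n)$ marginalizes down to an expectation over $(Z_n, Y_{\le j})$. Instantiating the definition of conditional Shannon entropy from the preliminaries with $X = Y_j$ and conditioning variable $(Z_n, Y_{<j})$, each such term is exactly $\ShanEnt(Y_j \mid Z_n, Y_{<j})$. This reduces the right-hand side to $\sum_{j=1}^m \ShanEnt(Y_j \mid Z_n, Y_{<j})$.

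Then I would apply the chain rule. Iterating the identity $\ShanEnt(X\mid Y) = \ShanEnt(X,Y) - \ShanEnt(Y)$ (the Shannon-entropy chain rule recalled in the preliminaries) block by block gives $\sum_{j=1}^m \ShanEnt(Y_j \mid Z_n, Y_{<j}) = \ShanEnt(Y_1,\ldots,Y_m \mid Z_n) = \ShanEnt(\vY_n \mid Z_n)$, which is the claimed equality.

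I expect no serious obstacle: this is essentially a bookkeeping argument. The only points requiring a little care are the marginalization step — verifying that the expectation over the full $(Z_n,\vY_n)$ correctly restricts to the marginal $(Z_n, Y_{\le j})$ for the $j$-th term — and checking that the support conditions in \cref{def:RealSamEnt} align so that every conditional probability being logged is nonzero and hence each sample-entropy term is well-defined.
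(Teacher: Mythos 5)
Your proof is correct and is the standard argument: the paper itself states this lemma as an import from \cite[Lemma 3.4]{HaitnerReVaWe18} without reproving it, and the proof there is exactly the bookkeeping you describe (expand the sample-entropy sum, apply linearity of expectation, recognize each term as a conditional Shannon entropy, and collapse via the chain rule). The two points you flag for care — marginalization of the $j$-th term to $(Z_n,Y_{\le j})$ and well-definedness on the support — indeed go through without issue.
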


Toward the definition of \emph{inaccessible entropy}, we first define \emph{online block-generators} which are a special type of block generators that toss fresh random coins before
outputting each new block.

\begin{definition}[Online block generator]
	Let $n$ be a security parameter, and let $\pp = \pp(n)$ and $m=m(n)$. An $m$-block  {\sf online} generator is a function $\Gs \colon \zo^\pp \times (\zo^{v})^{m} \mapsto  (\zs)^{m}$ for some $v =v(n)$, such that the \ith output block of $\Gs$ is a function of (only) its first $i$  input blocks. We denote the {\sf transcript} of $\Gs$ over random input by  $T_{\Gs}(1^n) = (Z,R_1,Y_1,\ldots,R_m,Y_m)$, for  $Z \getsr \zo^\pp$, $(R_1,\ldots,R_m) \getsr (\zo^{v})^{m}$ and $(Y_1,\ldots,Y_m) =\Gs(Z,R_1,\ldots,R_i)$.
\end{definition}

That is, an online block generator is a special type of block generator that tosses fresh random coins before outputting each  new block. In the following, we let $\Gs(z,r_1,\ldots,r_i)_i$  stand for $\Gs(z,r_1,\ldots,r_i,x^\ast)_i$ for arbitrary $x^\ast \in (\zo^{v})^{m -i}$ (note that the choice of $x^\ast$ has no effect on the value of $\Gs(z,r_1,\ldots,r_i,x^\ast)_i$).

\begin{definition}[Accessible sample-entropy]\label{def:AccessibleSampleEntropy}
	Let $n$ be a security parameter, and let $\Gs$ be an online $m=m(n)$-block online  generator. The {\sf accessible sample-entropy of $\vt =(z,r_1,y_1,\ldots,r_m,y_m)\in \Supp(Z,R_1,Y_1\ldots,R_m,Y_m) = T_{\Gs}(1^n)$} is defined by
	
	$$\Haccsam_{\Gs,n}(\vt) = \sum_{i=1}^{m} \ShanEnt_{Y_i|Z,R_{<i}}(y_i|z,r_{<i}).$$
\end{definition}
\noindent 
Again, we omit the security parameter from the above notation when clear from the context.

As in the case of real entropy, the expected accessible entropy of a random transcript can  be expressed in terms of the standard conditional Shannon entropy.
\begin{lemma}[{\cite[Lemma 3.7]{HaitnerReVaWe18}}]\label{lemma:acc_entr}
	Let $\Gs$ be an online $m$-block generator  and let  $(Z,R_1,Y_1,\ldots,\allowbreak R_m,\allowbreak Y_m) = T_\Gs(1^n)$ be its transcript.	Then, 
	$$\EE{\vt\getsr T_{\Gs}(Z,1^n)} {\Haccsam_{\Gs}(\vt)} = \sum_{i\in[m]} \ShanEnt(Y_i|Z,R_{<i}).$$
\end{lemma}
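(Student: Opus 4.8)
The plan is to establish the identity by an entirely elementary calculation: unfold the definition of $\Haccsam_{\Gs}$, push the expectation through the finite sum by linearity, and then recognize each resulting term as a conditional Shannon entropy. Writing the transcript as $T_\Gs(1^n) = (Z,R_1,Y_1,\ldots,R_m,Y_m)$, I would start from the definition of accessible sample-entropy,
\[
\EE{\vt \getsr T_\Gs(1^n)}{\Haccsam_\Gs(\vt)} = \EE{\vt}{\sum_{i=1}^m \ShanEnt_{Y_i|Z,R_{<i}}(y_i \mid z, r_{<i})},
\]
and immediately apply linearity of expectation to interchange the sum and the expectation. This reduces the claim to showing, for each fixed $i \in [m]$, that the $i$-th summand has expectation $\ShanEnt(Y_i \mid Z, R_{<i})$.

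For a fixed $i$, the key observation is that the summand $\ShanEnt_{Y_i|Z,R_{<i}}(y_i \mid z, r_{<i}) = \log\bigl(1/\pr{Y_i = y_i \mid Z=z, R_{<i}=r_{<i}}\bigr)$ is a function of the coordinates $(z, r_{<i}, y_i)$ of the transcript alone, and does not depend on $r_{\geq i}$ nor on the other output blocks. Hence I can replace the expectation over the full transcript $\vt$ by an expectation over the marginal $(Z, R_{<i}, Y_i)$ induced by $T_\Gs(1^n)$. It then remains to check that this induced marginal is precisely the joint distribution with respect to which $\ShanEnt(Y_i \mid Z, R_{<i})$ is defined. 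This holds because, by the definition of an online generator, $Y_i = \Gs(Z, R_1, \ldots, R_i)_i$ is a deterministic function of $(Z, R_{\leq i})$, while $(Z, R_1, \ldots, R_m)$ are sampled as independent uniform strings; marginalizing out the blocks $R_{i+1},\ldots,R_m$ leaves exactly the genuine joint law of $(Z, R_{<i}, Y_i)$, where the entropy of $Y_i$ given $(Z, R_{<i})$ is contributed by the fresh coins $R_i$.

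With the marginal identified, the remaining step is purely definitional: by the definition of conditional Shannon entropy,
\[
\EE{(z, r_{<i}, y_i)}{\ShanEnt_{Y_i|Z,R_{<i}}(y_i \mid z, r_{<i})} = \ShanEnt(Y_i \mid Z, R_{<i}),
\]
and summing over $i \in [m]$ yields the claimed equality. Since every step is either linearity of expectation or an unpacking of definitions, I do not anticipate a substantive obstacle. The only point that deserves genuine care is the marginalization argument, namely verifying that the distribution of $(Z, R_{<i}, Y_i)$ read off from the transcript coincides with the distribution used to define $\ShanEnt(Y_i \mid Z, R_{<i})$; this is exactly where the online structure of $\Gs$ (that $Y_i$ depends only on $Z$ and the first $i$ random blocks) is invoked, and it is worth stating explicitly so that the term $\ShanEnt(Y_i \mid Z, R_{<i})$ is interpreted with the correct conditioning.
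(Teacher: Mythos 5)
Your argument is correct: the identity follows from linearity of expectation applied to the finite sum in the definition of $\Haccsam_{\Gs}$, followed by the observation that each summand depends only on $(z,r_{<i},y_i)$, so its expectation over the full transcript equals the conditional Shannon entropy $\ShanEnt(Y_i\mid Z,R_{<i})$ by definition. The paper itself imports this lemma by citation from \cite{HaitnerReVaWe18} without reproducing a proof, and your derivation is exactly the standard one-line argument that the cited source relies on; your explicit care about identifying the marginal of $(Z,R_{<i},Y_i)$ is the right (and only) point needing attention.
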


We focus on efficient generators that are consistent with respect to $\Gc$.
That is, the support of their output is contained in that of $\Gc$.

\begin{definition}[Consistent  generators]\label{def:nonFailingGen}
	Let $\Gc$ be a block generator over $\zo^{\pp(n)} \times \zo^{s(n)}$. A block (possibly online) generator $\Gc'$ over $\zo^{\pp(n)} \times \zo^{s'(n)}$ is {\sf $\Gc$ consistent} if, for every $n\in \N$, it holds that  $\Supp(\Gc'(U_{\pp(n)},U_{s'(n)})) \subseteq \Supp(\Gc(U_{\pp(n)},U_{s(n)}))$.
\end{definition}

\begin{definition}[Accessible entropy]\label{def:accessible-entropy}
	A block generator $\Gc$ has {\sf accessible entropy at most $k = k(n)$} if, for  every efficient $\Gc$-consistent, online generator $\Gs$ and all large enough $n$,
	$$\EE{\vt\getsr T_\Gs(1^n)} {\Haccsam_\Gs(\vt)} \leq k.$$
\end{definition}

We call a  generator whose real entropy is noticeably higher than it  accessible entropy an inaccessible entropy generator.  

We use the following reduction from inaccessible entropy generators to constant round statistically   hiding commitment.

\begin{theorem}[{\cite[Thm.\ 6.24]{HaitnerReVaWe18}}]
	Let  $\Gc$ be an efficient  block generator  with constant number of blocks.  Assume   $\Gc$'s real Shannon entropy 
	is at least $k(n)$ for some efficiently computable function $k$, and that its accessible entropy is bounded by $k(n) - 1/p(n)$ for some  $p\in \poly$. Then  there exists a constant-round  statistically hiding and computationally binding commitment scheme. Furthermore, if the bound on the real entropy is invariant to the public parameter, then  the commitment is receiver public-coin.
\end{theorem}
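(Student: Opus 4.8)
The plan is to follow the standard inaccessible-entropy methodology in three stages: first amplify the noticeable gap between real and accessible entropy into a large gap, then \emph{flatten} the Shannon-entropy bounds into min-/max-entropy bounds, and finally turn the resulting generator into a commitment by block-wise hashing, where the low accessible entropy yields computational binding and the high real entropy yields statistical hiding. Throughout I would keep the number of output blocks fixed at $m$ (the constant number of blocks of $\Gc$), so that the final protocol runs in a constant number of rounds.

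First I would amplify by parallel repetition. Let $\Gc^{\otimes t}$ denote the $t$-fold direct product of $\Gc$, whose $i$-th block is the concatenation of the $i$-th blocks of the $t$ independent copies; crucially this preserves the constant block count $m$. Real Shannon entropy is additive (using the characterization of \cite[Lemma 3.4]{HaitnerReVaWe18}), so $\Gc^{\otimes t}$ has real entropy at least $tk$. For the accessible side I would invoke subadditivity of accessible entropy under direct products: every efficient $\Gc^{\otimes t}$-consistent online generator has expected accessible entropy at most $t(k-1/p)$, giving a gap of $t/p$. Then I would flatten: by the asymptotic equipartition property, for $t$ independent copies the real sample-entropy concentrates, so the output has real min-entropy at least $tk - O(\sqrt{t})\cdot\poly(n)$ with overwhelming probability, while a matching tail bound on the accessible sample-entropy gives accessible max-entropy at most $t(k-1/p) + O(\sqrt{t})\cdot\poly(n)$. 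Choosing $t=\poly(n)$ large enough that $t/p \gg \sqrt{t}\cdot\poly(n)$ makes the real min-entropy exceed the accessible max-entropy by a noticeable margin $g=g(n)$.

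Given such a generator I would build the commitment. The sender runs the amplified, flattened generator online; in round $i$ the receiver sends a random $2$-universal hash $h_i$ compressing the $i$-th block to roughly its accessible max-entropy, and the sender replies with $h_i(Y_i)$, so the total hashed length is about the accessible max-entropy. Binding follows from the accessible-entropy bound: an efficient sender that opens to two distinct consistent transcripts would have to produce, online and consistently with the hashes already sent, two distinct high-probability continuations, which witnesses accessible entropy above the proven upper bound; hence such a sender can be turned into an efficient $\Gc^{\otimes t}$-consistent generator violating the bound, so equivocation succeeds only negligibly often and binding is computational. Hiding follows from the real min-entropy: conditioned on the transcript of hashes (of total length about the accessible max-entropy), the generator output retains at least $g$ bits of min-entropy, so by the leftover hash lemma an additional extractor seed sent by the receiver yields a bit statistically close to uniform, which masks the committed bit. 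Since $m$ is constant and each block costs a constant number of messages, the protocol is constant-round; and when the real-entropy bound is invariant to the public parameter, the receiver's hashes and seed can be taken to be uniform public coins, giving the public-coin variant.

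The hard part will be the subadditivity of accessible entropy under parallel repetition. Unlike real Shannon entropy, accessible entropy is defined through a quantifier over \emph{all} efficient consistent online generators, and a cheating generator for $\Gc^{\otimes t}$ may correlate its behavior across the $t$ copies, so one cannot naively add per-copy bounds; establishing that such correlations cannot push the total accessible entropy beyond $t$ times the single-copy bound (via a hybrid argument over copies and blocks, tracking the accessible sample-entropy as a martingale) is the technical heart of the reduction. The secondary difficulty is the binding analysis, namely formally converting a successful equivocation into a $\Gc$-consistent online generator whose accessible entropy provably exceeds the flattened upper bound.
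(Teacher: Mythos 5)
This statement is not proven in the paper at all: it is imported verbatim as Theorem~6.24 of \cite{HaitnerReVaWe18}, and the authors use it as a black box (see the remark following it, which stresses that the needed version is the one for \emph{Shannon} accessible entropy). So there is no in-paper argument to compare against; what you have written is a reconstruction of the cited proof's strategy, and at the level of a roadmap it is the right one --- direct-product amplification, flattening of Shannon bounds into min/max bounds, and a block-wise hashing protocol whose binding comes from the accessible-entropy ceiling and whose hiding comes from the real-entropy floor, with the constant block count preserving constant round complexity.

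As a proof, however, the proposal has genuine gaps, and you have slightly misplaced where the difficulty lies. The direct-product subadditivity you single out as ``the technical heart'' is in fact the easy step for \emph{Shannon} accessible entropy: for any efficient $\Gc^{\otimes t}$-consistent online generator, subadditivity of conditional Shannon entropy gives $\HSh(Y_i\mid Z,R_{<i})\le\sum_j \HSh(Y_i^j\mid Z,R_{<i})$, and projecting to the $j$-th coordinate (absorbing the other copies' public parameters and coins into the projected generator's randomness) yields a $\Gc$-consistent online generator to which the single-copy bound $k-1/p$ applies; summing over $j$ gives $t(k-1/p)$ with no loss. The genuinely hard steps are the ones you wave at: (i) converting the bound on \emph{expected} accessible entropy of the repeated generator into a high-probability bound on accessible \emph{sample}-entropy (your ``matching tail bound'') --- the per-block sample entropies of an adversarial online generator are not independent, and the martingale argument here is where most of the work in \cite{HaitnerReVaWe18} sits; and (ii) the commitment protocol itself. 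As literally described --- receiver sends a single two-universal hash compressing block $i$ to its accessible max-entropy, sender replies with the hash value --- the scheme is not binding: the hash image is large enough that many accessible blocks remain consistent with each answer, and the sender can defer its choice. The actual construction needs interactive hashing (or an equivalent gradual-revealing subprotocol) per block, together with a careful accounting of how an equivocating sender is converted into a consistent online generator whose accessible sample-entropy exceeds the flattened bound. Since those two steps \emph{are} the theorem, the proposal should be regarded as a correct outline of the cited proof rather than a proof.
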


\begin{remark}[Inaccessible max/average entropy]
Our result relies on the reduction from inaccessible \emph{Shannon} entropy generators to statistically hiding commitments, given  in \cite{HaitnerReVaWe18}. The proof of this reduction follows closely the proof in previous versions~\cite{HaitnerVadhan17,HaitnerRVW09}, where the reduction was from inaccessible \emph{max} entropy generators.
The extension to Shannon entropy generators is essential for our result.
\end{remark}

\subsection{From \DCRH to Inaccessible Entropy Generators -- Proof of Theorem~\ref{thm:stat_hide_dcrh}}\label{our-generator}
In this section we show that there is a block generator with two blocks in which there
is a gap between the real entropy and the accessible entropy.  Let $\cH = \{
\cH_n\colon\allowbreak \bit^{n} \to \bit^{m} \}_{n\in \N}$ be a \DCRH for $m = m(n)$ and assume that each $h\in \cH_n$ requires $c=c(n)$ bits to describe.  By Definition~\ref{def:dcrh}, there exists a polynomial $p(\cdot)$ such that for any probabilistic polynomial-time algorithm $\sA$, it holds that
  \begin{align*}
        \SD\left((h,\sA(1^n,h)),(h,\Col(h))\right) = 
        \EE{h\leftarrow \cH_n}{\SD\left(\sA(1^n,h),\Col(h)
        \right)} \ge
    \frac{1}{p(n)}
  \end{align*}
for large enough $n\in \N$, where $h \leftarrow \cH_n$.
  
The generator $\Gc\colon \bit^c\times \bit^n \to \bit^m \times \bit^n$ is defined by
\begin{align*}
  \Gc(h, x) = (h(x), \; x).
\end{align*}

The public parameter length is $c$ (this is the description size of $h$), the generator consists of two blocks, and the maximal block length is $\max\{n,m\}$. Since the random coins of $\Gc$ define
$x$ and $x$ is completely revealed, the real Shannon entropy of $\Gc$ is $n$. That is,
\begin{align*}
  \EE{y\leftarrow G(U_c, U_n)}{\mathsf{RealH}_\Gc(y)} = n.
\end{align*}

Our goal in the remaining of this section is to show a non-trivial upper
bound on the accessible entropy of $\Gc$. We prove the following lemma.
\begin{lemma}\label{lemma:accessible}
There exists a polynomial $q(\cdot)$ such that for every $\Gc$-consistent online generator $\Gs$, it holds that
  \begin{align*}
  \EE{t\leftarrow
  T_{\Gs} (Z,1^n)}{\mathsf{AccH}_{\Gs}(t)} \le n-\frac{1}{q(n)}
\end{align*}
for all large enough $n\in \N$.
\end{lemma}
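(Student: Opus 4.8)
The plan is to prove the contrapositive by a reduction to the \DCRH guarantee of \Cref{def:dcrh}: from any efficient $\Gc$-consistent online generator $\Gs$ whose expected accessible entropy is too close to $n$, I would construct an efficient adversary $\sA$ that samples collisions statistically close to $\Col$. The adversary is the natural one. On input $h$, run the first block of $\Gs$ with coins $r_1$ to obtain $y_1 = \Gs(h,r_1)_1$, and then run the \emph{second} block \emph{twice}, with independent fresh coins but the \emph{same} $r_1$, to obtain two preimages $x_1 = \Gs(h,r_1,r_2)_2$ and $x_2 = \Gs(h,r_1,r_2')_2$; output $(x_1,x_2)$. By $\Gc$-consistency both outputs satisfy $h(x_1) = y_1 = h(x_2)$, so $(x_1,x_2)$ is a genuine collision, and $\sA$ is efficient since $\Gs$ is.

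The next step is to set up the comparison. Write $\nu_h$ for the distribution of the first block $y_1$ and $\beta_{h,r_1}$ for the conditional distribution of the second block given $r_1$ (supported on $h^{-1}(y_1)$). The key observation is that $\Col(h)$ can equivalently be sampled as: draw a common image $y$ with $\pr{y} = |h^{-1}(y)|/2^n$ (the image of a uniform $x$), then draw $x_1,x_2$ i.i.d.\ uniformly from $h^{-1}(y)$. Via a short hybrid argument --- crucially retaining $r_1$ in the transcript so that the two second-block draws are independent conditioned on it, and only then projecting onto $(x_1,x_2)$ --- this yields
\begin{align*}
\SD(\sA(h),\Col(h)) \le \SD(\nu_h, h(U_n)) + 2\,\EE{r_1}{\SD\brac{\beta_{h,r_1}, U(h^{-1}(y_1))}},
\end{align*}
where the factor $2$ comes from subadditivity of statistical distance for product distributions.

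I would then bound both terms by relative entropy via Pinsker (\Cref{prop:pinsker}) and compute the divergences explicitly. A direct calculation gives $\KL(\nu_h \| h(U_n)) = n - \ShanEnt(\nu_h) - \EE{y\sim\nu_h}{\log |h^{-1}(y)|}$ and $\EE{r_1}{\KL(\beta_{h,r_1}\|U(h^{-1}(y_1)))} = \EE{y\sim\nu_h}{\log|h^{-1}(y)|} - \ShanEnt(Y_2 \mid Z, R_1)$. The cross terms cancel, so their sum is exactly $n - \ShanEnt(Y_1\mid Z) - \ShanEnt(Y_2 \mid Z, R_1)$, which by \Cref{lemma:acc_entr} equals the entropy gap $\gamma := n - \EE{t}{\Haccsam_{\Gs}(t)}$. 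Averaging over $h \leftarrow \cH_n$ and applying Jensen (\Cref{prop:jensen}, concavity of $\sqrt{\cdot}$) to pull the expectation inside the square root, followed by Cauchy--Schwarz on the two summands, bounds the average distance by $O(\sqrt{\gamma})$. Since \DCRH forces $\EE{h}{\SD(\sA(h),\Col(h))} \ge 1/p(n)$, this gives $\gamma \ge \Omega(1/p(n)^2)$, so taking $q(n) = \Theta(p(n)^2)$ yields $\EE{t}{\Haccsam_{\Gs}(t)} \le n - 1/q(n)$, as claimed.

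The step I expect to be the main obstacle is the hybrid bounding $\SD(\sA(h),\Col(h))$: the second-block samples produced by $\sA$ are independent only conditioned on $r_1$, whereas in $\Col$ they are independent conditioned on the image $y$, and distinct values of $r_1$ may produce the same $y$. Handling this correctly --- by comparing the augmented distributions that retain $r_1$ before projecting onto $(x_1,x_2)$ --- is precisely what makes the clean telescoping identity (two divergences summing exactly to $\gamma$) go through. The remaining entropy bookkeeping and the Pinsker/Jensen estimates are routine.
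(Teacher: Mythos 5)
Your proposal is correct and follows essentially the same route as the paper: the identical rewinding adversary, Pinsker's inequality plus a chain-rule decomposition of the divergence into a first-coordinate term and a conditional second-coordinate term, Jensen to move expectations inside the square roots, and $q(n)=\Theta(p(n)^2)$. The only real difference is where the second divergence is conditioned --- the paper conditions on $x_1$ and argues that the conditional law of $X_2^{\sA}$ depends only on $h(x_1)$, whereas you condition on the first-block coins $r_1$, which yields the same telescoping sum $n-\ShanEnt(Y_1\mid Z)-\ShanEnt(Y_2\mid Z,R_1)$ and cleanly disposes of the subtlety you flagged.
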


\remove{
\begin{proof}
\nir{Here is perhaps a more elementary proof (w/o KL divergence}
Fix a $\Gc$-consistent online generator $\Gs$. Let us denote by $Y$ a
  random variable that corresponds to the first part of $\Gc$ (i.e., the first $m$
  bits) and by $X$ the second part (i.e., the last $n$ bits). Denote by $R$ the
  randomness used by the adversary to sample $Y$. Assume towards contradiction
  that \nir{for infinitely many $n$?}
  \begin{align*}
    \EE{t\leftarrow
    T_{\Gs} (1^n)}{\mathsf{AccH}_{\Gs}(t)} \ge n-\negl(n).
  \end{align*}
  By definition of accessible entropy, this means that
  \begin{align}\label{eq1:contradiction}
    \ShanEnt(Y) + \ShanEnt(X \mid Y, R) \ge n-\negl(n).
  \end{align}
We show how to construct an adversary $\sA$ that can break the security of the
  dCRH.  The algorithm $\sA$ does the following:
  \begin{enumerate}
  \item Sample $r$ and let $y = \Gs(r)_1$
  \item Sample $r_1,r_2$ and output $x_1=\Gs(r,r_1)_2$ and
    $x_2 = \Gs(r,r_2)_2$.
  \end{enumerate}
  In other words, $\sA$ tries to create a collision by running $\Gc$ to get the
  first block, $y$, and then running it twice (by rewinding) to get two inputs
  $x_1,x_2$ that are mapped to $y$. Indeed, if $\Gs$ is
  $\Gc$-consistent. then $x_1$ and $x_2$ collide.
    
  To prove that $\sA$ samples close-to-random collisions, we prove the following two claims:
  \begin{claim}\label{clm:clm1_elem}
  $\ShanEnt(X) \geq n-\negl(n)$.
  \end{claim}
  \begin{claim}\label{clm:clm2_elem}
  $\Pr_{(y,r)\gets (Y,R)}\left[\ShanEnt(X \mid y,r) \geq \log\left|h^{-1}(y)\right|-\negl(n)\right]\geq 1-\negl(n).$
  \end{claim}

The first claim says that 
the (marginal) distribution of elements $x_1$ output by $\sA$ is statistically close to uniform. The second claim implies that with overwhelming probability over $x_1$, the element $x_2$ output by $\sA$ is statistically close to uniform over the set of all preimages $h^{-1}(y)$, for $y=h(x_1)$. Accordingly, the two claims together imply that $\sA$ breaks the dCRH. 

\begin{proof}[Proof of Claim \ref{clm:clm1_elem}]
  By the fact that $X$ determines $Y$, and by Equation~\eqref{eq1:contradiction}, 
  \begin{align*}
    \ShanEnt(X) = \ShanEnt(X,Y) = \ShanEnt(Y) + \ShanEnt(X \mid Y) \geq 
    \ShanEnt(Y) + \ShanEnt(X \mid Y, R) \ge n-\negl(n).
  \end{align*}

\end{proof}

\begin{proof}[Proof of Claim \ref{clm:clm2_elem}]
Let $X^*$ be the distribution on inputs where we first sample $y\gets Y$, and then sample uniformly at random from $h^{-1}(y)$. Then we have
\begin{align*}
    n\geq \ShanEnt(X^*) = \ShanEnt(X^*, Y)= \ShanEnt(Y)+\ShanEnt(X^*\mid Y) =\ShanEnt(Y)+\EE{y\gets Y}{\log|h^{-1}(y)|}.
\end{align*}
Plugging this into Equation~\eqref{eq1:contradiction}, we have
\begin{align*}
\ShanEnt(X \mid Y, R) \ge n-\ShanEnt(Y)-\negl(n) \geq \EE{y\gets Y}{\log|h^{-1}(y)|} -\negl(n).
  \end{align*}
Furthermore, we know that for {\em any} $y,r$ in the support of $Y,R$,
\begin{align*}
\ShanEnt(X \mid y, r)\leq \MaxEnt(X \mid y, r) = {\log|h^{-1}(y)|}.
  \end{align*}
  By averaging, the last two equations imply
  \begin{align*}
      \Pr_{(y,r)\gets (Y,R)}\left[\ShanEnt(X \mid y,r) \geq \log\left|h^{-1}(y)\right|-\negl(n)\right]\geq 1-\negl(n).
  \end{align*}
\end{proof}

\end{proof}
}

\begin{proof}
  Fix a $\Gc$-consistent online generator $\Gs$. Let us denote by $Y$ a
  random variable that corresponds to the first part of $\Gc$'s output (i.e., the first $m$
  bits) and by $X$ the second part (i.e., the last $n$ bits). Denote by $R$ the
  randomness used by the adversary to sample $Y$. Denote by $Z$ the random variable that corresponds to the description of the hash function $h$. Fix $q(n) \triangleq 4\cdot p(n)^2 $ Assume towards contradiction that for infinitely many $n$'s it holds that
  \begin{align*}
    \EE{t\leftarrow
    T_{\Gs} (Z, 1^n)}{\mathsf{AccH}_{\Gs}(t)} > n-\frac{1}{q(n)}.
  \end{align*}
  By Lemma~\ref{lemma:acc_entr}, this means that
  \begin{align}\label{eq:contradiction}
    \ShanEnt(Y\mid Z) + \ShanEnt(X \mid Y, Z, R) > n-\frac{1}{q(n)}
  \end{align}

  We show how to construct an adversary $\sA$ that can break the security of the dCRH.  The algorithm $\sA$, given a hash function $h\leftarrow \cH$, does the following:
  \begin{enumerate}
  \item Sample $r$ and let $y = \Gs(h, r)_1$
  \item Sample $r_1,r_2$ and output $x_1=\Gs(h, r,r_1)_2$ and
    $x_2 = \Gs(h, r,r_2)_2$.
  \end{enumerate}
  
  In other words, $\sA$ tries to create a collision by running $\Gc$ to get the
  first block, $y$, and then running it twice (by rewinding) to get two inputs
  $x_1,x_2$ that are mapped to $y$. Indeed, $\sA$ runs in
  polynomial-time and if $\Gs$ is
  $\Gc$-consistent, then $x_1$ and $x_2$ collide relative to $h$.  Denote by $Y^\sA$, $X_1^\sA$, and $X_2^\sA$ be random
  variables that correspond to the output of the emulated $\Gs$. Furthermore, denote by
  $(X_1^{\Col},X_2^{\Col})$ a random collision that $\Col(h)$ samples. To finish the
  proof it remains to show that
  \begin{align*}
    \EE{h\leftarrow \cH_n}{\SD((X_1^\sA,X_2^\sA), (X_1^{\Col},X_2^{\Col}))} \le \frac{1}{p(n)}
  \end{align*}
  which is a contradiction.

  By Pinsker's inequality (\Cref{prop:pinsker}) and the chain rule from
  \Cref{prop:kl_chain_rule}, it holds that
  \begin{align*}
    \SD&\left(\left(X_1^\sA,X_2^\sA\right), \left(X_1^{\Col},X_2^{\Col}\right)\right) \le \sqrt{\frac{\ln(2)}{2} \cdot \KL(X_1^\sA,X_2^\sA \|   X_1^{\Col},X_2^{\Col})} 
    \\  & = \sqrt{\KL\left(X_1^\sA \|X_1^{\Col}\right) + \EE{x_1\leftarrow
          X_1^\sA}{\KL(X_2^\sA|_{X_1^\sA=x_1} \| X_2^{\Col}|_{X_1^{\Col}=x_1})}}
          \\ & \leq \sqrt{\KL\left(X_1^\sA \|X_1^{\Col}\right)} + \sqrt{\EE{x_1\leftarrow
          X_1^\sA}{\KL(X_2^\sA|_{X_1^\sA=x_1} \| X_2^{\Col}|_{X_1^{\Col}=x_1})}}.
  \end{align*}
  
  Hence, by Jensen's inequality (\Cref{prop:jensen}), it holds that
\begin{align*}
    \EE{h\leftarrow \cH_n}{\SD((X_1^\sA,X_2^\sA), (X_1^{\Col},X_2^{\Col}))} \leq &  \sqrt{\EE{h\leftarrow \cH_n}{\KL(X_1^\sA \|X_1^{\Col})}} + \\ & \sqrt{\EE{\begin{subarray}{c}
h\leftarrow \cH_n\\
x_1\leftarrow X_1^\sA
\end{subarray}}{\KL(X_2^\sA|_{X_1^\sA=x_1} \|
      X_2^{\Col}|_{X_1^{\Col}=x_1})}}.
\end{align*}
  
  We complete the proof using the  following claims.
  \begin{myclaim} \label{claim:1} 
  It holds that 
  $$\EE{h\leftarrow \cH_n}{{\KL(X_1^\sA \|X_1^{\Col})}} \leq \frac{1}{p(n)^2}.$$
  \end{myclaim}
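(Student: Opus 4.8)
The plan is to reduce the claim to a clean statement about the Shannon entropy of the second output block of $\Gs$, exploiting the fact that the first coordinate of the ideal collision finder is trivial. The key starting observation is that $X_1^{\Col}$ is simply the uniform distribution $U_n$ over $\zn$, since $\Col(h)$ samples its first element $x_1 \leftarrow \zn$ uniformly at random. For relative entropy against the uniform distribution I would use the elementary identity $\KL(W \| U_n) = n - \ShanEnt(W)$, valid for every random variable $W$ supported on $\zn$; it follows at once by expanding the definition of $\KL$ and of $\ShanEnt$ (the $\log(1/2^{-n})$ term contributes exactly $n$).

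Next I would identify, for each fixed $h$, the distribution of $X_1^\sA$ with the marginal of the second output block $X$ of $\Gs$ conditioned on $Z = h$. Indeed, $\sA$ first samples the first-block randomness $r$ and then samples $x_1$ from the second block via fresh randomness $r_1$, so $X_1^\sA\big|_{Z=h}$ is distributed exactly as $X\big|_{Z=h}$. Averaging the entropy identity over $h \leftarrow \cH_n$ (equivalently over the uniform public parameter $Z$) then gives
\begin{align*}
  \EE{h\leftarrow \cH_n}{\KL(X_1^\sA \| X_1^{\Col})}
    = n - \EE{h\leftarrow \cH_n}{\ShanEnt\!\left(X\big|_{Z=h}\right)}
    = n - \ShanEnt(X \mid Z),
\end{align*}
so it remains only to lower bound $\ShanEnt(X\mid Z)$.

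For that lower bound I would invoke two standard facts. First, since the pair $(Z,X)$ determines $Y = h(X)$, the chain rule for Shannon entropy yields $\ShanEnt(X\mid Z) = \ShanEnt(X,Y\mid Z) = \ShanEnt(Y\mid Z) + \ShanEnt(X\mid Y,Z)$. Second, conditioning never increases entropy, so $\ShanEnt(X\mid Y,Z) \geq \ShanEnt(X\mid Y,Z,R)$. Combining these with the contradiction hypothesis in Equation~\eqref{eq:contradiction} and the choice $q(n) = 4p(n)^2$ gives
\begin{align*}
  \ShanEnt(X\mid Z)
    \geq \ShanEnt(Y\mid Z) + \ShanEnt(X\mid Y,Z,R)
    > n - \frac{1}{q(n)},
\end{align*}
whence $n - \ShanEnt(X\mid Z) < 1/q(n) = 1/(4p(n)^2) \leq 1/p(n)^2$, which is exactly the asserted bound (and in fact the stronger constant $1/(4p(n)^2)$ that the subsequent Pinsker/Jensen accounting requires).

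The entropy identity, the chain rule, and the monotonicity of conditional entropy are all immediate, so no genuinely hard calculation arises. I expect the only delicate point to be the bookkeeping of the second paragraph: verifying that $\sA$'s sampling procedure makes $X_1^\sA\big|_{Z=h}$ coincide with the second block $X\big|_{Z=h}$ of $\Gs$, and that averaging over $h$ produces precisely the conditional Shannon entropy $\ShanEnt(X\mid Z)$. This depends on unwinding the definitions of the online generator and of $\sA$, but once the distributions are correctly matched the rest is routine.
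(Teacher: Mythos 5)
Your proposal is correct and follows essentially the same route as the paper's proof: the identity $\KL(W\|U_n)=n-\ShanEnt(W)$, the identification of $X_1^\sA$ with the second block $X$ of $\Gs$, and the chain rule plus Equation~\eqref{eq:contradiction} to lower bound $\ShanEnt(X\mid Z)$ by $n-1/q(n)$. Your explicit use of monotonicity of conditioning to pass from $\ShanEnt(X\mid Y,Z)$ to $\ShanEnt(X\mid Y,Z,R)$, and your observation that the stronger bound $1/(4p(n)^2)$ is what the later Pinsker/Jensen step actually consumes, are both accurate refinements of what the paper writes.
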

  \begin{myclaim} \label{claim:2}
  It holds that 
    $$\EE{\begin{subarray}{c}
h\leftarrow \cH_n\\
x_1\leftarrow X_1^\sA
\end{subarray}}{\KL(X_2^\sA|_{X_1^\sA=x_1} \|
      X_2^{\Col}|_{X_1^{\Col}=x_1})} \leq \frac{1}{p(n)^2}.$$
  \end{myclaim}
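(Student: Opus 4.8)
The goal is to bound the averaged divergence in Claim~\ref{claim:2} by $1/q(n)$ and contradict \eqref{eq:contradiction}. Throughout I condition on $Z=h$ and average over $h\leftarrow\cH_n$ at the end. The first step is to rewrite each inner divergence as a gap between a log-support term and an entropy term. By $\Gc$-consistency, every run $\Gs(h,r,r')$ outputs a pair whose first coordinate is $h$ of its second; since the first block $y=\Gs(h,r)_1$ depends only on $r$, both $x_1=\Gs(h,r,r_1)_2$ and $x_2=\Gs(h,r,r_2)_2$ satisfy $h(x_1)=h(x_2)=y$. Hence $X_2^\sA|_{X_1^\sA=x_1}$ is supported on $h^{-1}(h(x_1))$, which is exactly the support of the uniform reference $X_2^{\Col}|_{X_1^{\Col}=x_1}$. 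Using the identity $\KL(P\|\mathrm{Unif}(S))=\log\left|S\right|-\ShanEnt(P)$ (valid whenever $\Supp(P)\subseteq S$) and averaging, I would get
\begin{align*}
& \EE{\begin{subarray}{c} h\leftarrow \cH_n\\ x_1\leftarrow X_1^\sA \end{subarray}}{\KL(X_2^\sA|_{X_1^\sA=x_1} \| X_2^{\Col}|_{X_1^{\Col}=x_1})} \\
& \qquad = \EE{\begin{subarray}{c} h\leftarrow \cH_n\\ x_1\leftarrow X_1^\sA \end{subarray}}{\log\left|h^{-1}(h(x_1))\right|} - \ShanEnt(X_2^\sA \mid X_1^\sA, Z).
\end{align*}

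\textbf{Entropy term.} Next I would lower bound $\ShanEnt(X_2^\sA\mid X_1^\sA,Z)$ using the Markov structure of the rewinding attacker: conditioned on the shared first-block randomness $R$ (and on $Z$), the outputs $X_1^\sA$ and $X_2^\sA$ are independent, each distributed as a single second block $X=\Gs(h,R,\cdot)_2$. Since conditioning cannot increase entropy and since $Y$ is a deterministic function of $(Z,R)$,
\begin{align*}
\ShanEnt(X_2^\sA\mid X_1^\sA, Z) \ge \ShanEnt(X_2^\sA\mid X_1^\sA, R, Z) = \ShanEnt(X\mid R,Z) = \ShanEnt(X\mid Y,Z,R).
\end{align*}

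\textbf{Log-support term.} When $x_1\leftarrow X_1^\sA$ we have $h(x_1)=Y^\sA$, so the first term equals $\EE{h,\,y\leftarrow Y^\sA}{\log\left|h^{-1}(y)\right|}$. Writing $Q_h(y)=\left|h^{-1}(y)\right|/2^n$, which is precisely the first-block law $Y^{\Col}=h(U_n)$ induced by the uniform $X_1^{\Col}$, a one-line manipulation (split $\log\left|h^{-1}(y)\right| = n + \log Q_h(y)$) gives, for every fixed $h$,
\begin{align*}
\EE{y\leftarrow Y^\sA}{\log\left|h^{-1}(y)\right|} = n - \ShanEnt(Y^\sA) - \KL(Y^\sA \| Y^{\Col}) \le n - \ShanEnt(Y^\sA).
\end{align*}
Averaging over $h$ and discarding the nonnegative divergence yields $\EE{h,\,y\leftarrow Y^\sA}{\log\left|h^{-1}(y)\right|}\le n-\ShanEnt(Y\mid Z)$. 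Combining with the entropy bound, the averaged divergence is at most $n-\big(\ShanEnt(Y\mid Z)+\ShanEnt(X\mid Y,Z,R)\big)$, which by \eqref{eq:contradiction} is strictly below $1/q(n)=1/(4p(n)^2)$, establishing Claim~\ref{claim:2}. The simpler Claim~\ref{claim:1} follows from the same bookkeeping applied to the first block alone: since $X_1^{\Col}$ is uniform on $\zn$ and $X$ determines $Y$, one gets $\EE{h}{\KL(X_1^\sA\|X_1^{\Col})}=n-\ShanEnt(X\mid Z)\le n-(\ShanEnt(Y\mid Z)+\ShanEnt(X\mid Y,Z,R))<1/q(n)$.

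\textbf{Main obstacle.} The delicate point is the entropy term: conditioning on $X_1^\sA=x_1$ re-weights the posterior over the shared randomness $R$, so $X_2^\sA\mid X_1^\sA$ is \emph{not} the plain second-block law, and its entropy seems hard to compute directly. What rescues the argument is that only a \emph{lower} bound is needed, supplied by re-introducing $R$ into the conditioning, after which the two outputs decouple. I would also verify the support-containment hypothesis of the uniform-reference identity — exactly where $\Gc$-consistency ($h(x_1)=h(x_2)$) is used — and note that although Claim~\ref{claim:2} is stated with bound $1/p(n)^2$, the proof in fact delivers the stronger $1/q(n)=1/(4p(n)^2)$; this stronger constant is what makes the two $\sqrt{\cdot}$ terms in the Pinsker step sum to $1/p(n)$, yielding the desired contradiction.
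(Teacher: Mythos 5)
Your proof is correct and follows essentially the same route as the paper's: the identity $\KL(P\|\mathrm{Unif}(S))=\log|S|-\ShanEnt(P)$ (with support containment supplied by $\Gc$-consistency), a lower bound on the conditional entropy term by $\ShanEnt(X\mid Y,Z,R)$, and the bound $\EE{h,\,y}{\log|h^{-1}(y)|}\le n-\ShanEnt(Y\mid Z)$, which the paper obtains via Jensen's inequality rather than your exact KL decomposition. If anything, your handling of the ``delicate point'' is more careful than the paper's: the paper asserts that $X_2^\sA|_{X_1^\sA=x_1}$ depends on $x_1$ only through $h(x_1)$ and passes directly to $\ShanEnt(X\mid Y,R)$, whereas you justify the needed lower bound by re-introducing $R$ and using the conditional independence of the two rewound outputs; your observation that the argument actually delivers $1/q(n)=1/(4p(n)^2)$ --- the constant the final Pinsker step requires --- is also correct.
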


\begin{proof}[Proof of Claim~\ref{claim:1}]
    Recall that $X_1^{\Col}$ is the \emph{uniform} distribution over the inputs of
  the hash function and thus
  \begin{align*}
    \KL(X_1^\sA \|X_1^{\Col}) = \sum_x \pr{X_1^\sA = x}\cdot \log \frac{\pr{X_1^\sA
    = x}}{2^{-n}} = n - \ShanEnt(X_1^\sA).
  \end{align*}
  
  To sample $X_1^\sA$, the algorithm $\sA$ first runs $\Gs(r)_1$ to get $y$
  and then runs $G(r,r_1)$ to get $x_1$. Thus, by
  Equation~\eqref{eq:contradiction}, it holds that
  \begin{align*}
    \EE{h\leftarrow \cH_n}{\ShanEnt(X_1^\sA)} = \EE{h\leftarrow \cH_n}{\ShanEnt(X)}
    = \ShanEnt(X,Y\mid Z) = \ShanEnt(Y\mid Z) + \ShanEnt(X\mid Y,Z,R) \geq n-\frac{1}{q(n)},
  \end{align*}
  where the second equality follows since $\Gs$ is $\Gc$-consistent and thus
  $X$ fully determines $Y$. This implies that
  \begin{align*}
      \EE{h\leftarrow \cH_n}{{\KL(X_1^\sA \|X_1^{\Col})}} \leq  \frac{1}{q(n)} = \frac{1}{p(n)^2},
  \end{align*}
  as required.
\end{proof}

\begin{proof}[Proof of Claim~\ref{claim:2}]
  
  For $x_1 \in \supp(X_1^\sA)$, it holds that
  \begin{align*}
    \KL(X_2^\sA|_{X_1^\sA=x_1} \| X_2^{\Col}|_{X_1^{\Col}=x_1}) & = \sum_x \pr{X_2^\sA=x|_{X_1^\sA=x_1}}\cdot \log \frac{\pr{X_2^\sA=x|_{X_1^\sA=x_1}}}{|h^{-1}(h(x_1))|^{-1}}
    \\& = \log |h^{-1}(h(x_1))| - \ShanEnt(X_2^\sA |_{X_1^\sA=x_1}) .
  \end{align*}
  Hence,
  \begin{align*}
    \EE{\begin{subarray}{c}
h\leftarrow \cH_n\\
x_1\leftarrow X_1^\sA
\end{subarray}}{\KL(X_2^\sA|_{X_1^\sA=x_1} \|
    X_2^{\Col}|_{X_1^{\Col}=x_1})} & = \EE{\begin{subarray}{c}
h\leftarrow \cH_n\\
x_1\leftarrow X_1^\sA
\end{subarray}}{\log |h^{-1}(h(x_1))| -
                                \ShanEnt(X_2^\sA |_{X_1^\sA=x_1})} .
  \end{align*}
  
  Notice that the distribution of $X_2^\sA$ only depends on $y=h(x_1)$, that is,
  $X_2^\sA |_{X_1^\sA=x_1}$ is distributed exactly as $X_2^\sA |_{X_1^\sA=x_1'}$
  for every $x_1$ and $x_1'$ that such that $y=h(x_1)=h(x_1')$. Thus, we have
  that $X_2^\sA|_{X_1^\sA=x_1}$ is distributed exactly as $X|_{Y=y}$ and the
  distribution of $h(X_1)$ is distributed as $Y$. Namely,
  \begin{align*}
    \EE{\begin{subarray}{c}
h\leftarrow \cH_n\\
x_1\leftarrow X_1^\sA
\end{subarray}}{\KL(X_2^\sA|_{X_1^\sA=x_1} \|
    X_2^{\Col}|_{X_1^{\Col}=x_1})} & = \EE{\begin{subarray}{c}
h\leftarrow \cH_n\\
x_1\leftarrow X_1^\sA
\end{subarray}}{\log |h^{-1}(y)|} -
                                \EE{h\leftarrow \cH_n}{\ShanEnt(X \mid Y,R)}\\& = \EE{\begin{subarray}{c}
h\leftarrow \cH_n\\
x_1\leftarrow X_1^\sA
\end{subarray}}{\log |h^{-1}(y)|} -
                                \ShanEnt(X \mid Y, Z,R)
    \\ & \le \EE{\begin{subarray}{c}
h\leftarrow \cH_n\\
x_1\leftarrow X_1^\sA
\end{subarray}}{\log |h^{-1}(y)|}+\ShanEnt(Y\mid Z) -n +\frac{1}{q(n)} 
    \\ & =\frac{1}{q(n)},             
  \end{align*}
  where the first inequality follows by Equation~\eqref{eq:contradiction} and
  the second follows since
  \begin{align*}
   \EE{\begin{subarray}{c}
h\leftarrow \cH_n\\
y\leftarrow Y
\end{subarray}}{\log |h^{-1}(y)|}+\ShanEnt(Y\mid Z) & = \EE{\begin{subarray}{c}
h\leftarrow \cH_n\\
y\leftarrow Y
\end{subarray}}{\log
                                                        |h^{-1}(y)| + \ShanEnt_Y(y)}
    \\ & = \EE{\begin{subarray}{c}
h\leftarrow \cH_n\\
y\leftarrow Y
\end{subarray}}{\log \frac{|h^{-1}(y)|}{\pr{Y=y}}}
    \\ & \le \log \EE{\begin{subarray}{c}
h\leftarrow \cH_n\\
y\leftarrow Y
\end{subarray}}{\frac{|h^{-1}(y)|}{\pr{Y=y}}} = n,
  \end{align*}
  where the inequality is by Jensen's inequality (\Cref{prop:jensen}). Thus, overall
  \begin{align*}
      \EE{\begin{subarray}{c}
h\leftarrow \cH_n\\
x_1\leftarrow X_1^\sA
\end{subarray}}{\KL(X_2^\sA|_{X_1^\sA=x_1} \|
      X_2^{\Col}|_{X_1^{\Col}=x_1})} \leq {\frac{1}{q(n)}} = \frac{1}{p(n)^2},
      \end{align*}
    as required.
\end{proof}
\end{proof}

\subsection{From Statistically Hiding Commitments to \DCRH -- Proof of Theorem~\ref{thm:dcrh_from_stat_hiding}}\label{sec:dcrh_from_stat_hiding}

Let $\pi=(\sender, \receiver, \verifier)$ be a binding and statistically hiding  two-message commitment scheme. We show that there  exists a \DCRH family $\cH$.

 To sample a hash function in 
the family with security parameter $n$, we use the receiver's first 
message of the protocol. Namely, we set the hash function as $h \gets 
\receiver(1^n)$. Then, 
to evaluate $h$ on input $x$ we first parse $x$ as $x=(b,r)$, where $b$ is a bit, and output a commitment to the bit $b$ using randomness $r$, with respect to the receiver message $h$. That is, we set 
$$h(x)=\sender(h,b;r).$$

Since $\pi$ is efficient, then 
sampling and evaluating $h$ are polynomial-time procedures. This 
concludes the definition of our family $\cH$ of  hash functions. (Note that the functions in the family are not necessarily compressing.)

We next argue security. Suppose toward contradiction that $\cH$ is not a  \DCRH according to Definition~\ref{def:dcrh}. Then, for any $\delta(n)= n^{-O(1)}$ there exists an adversary $\sA$, such that
\begin{align}\label{eq:sdbreaker}
    \SD\left((h,\sA(1^n,h)),(h,\Col(h))\right) \le \delta,
  \end{align}
  for infinitely many $n$'s. From hereon, we fix $\delta$ to be any function such that $n^{-O(1)}< \delta< \frac{1}{2}-n^{-O(1)}$.

We show how to use $\sA$ to break the binding property of the commitment scheme. Our 
cheating receiver $\receiver^*$ is defined as follows: On input $h$,
$\receiver^*$ runs $\sA(h)$ to get $x$ and $x'$, interprets $x=(b,r)$ and $x'=(b',r')$ and outputs $b$ and $b'$ along with their openings $r$ and $r'$, respectively. Our goal is to show that $x=(b,r)$ and $x'=(b',r')$ are two valid distinct openings to the commitment scheme.

By Equation \eqref{eq:sdbreaker}, it suffices to analyze the success probability when the pair $(x,x')$ is sampled according to the distribution $\Col_h$, and show that it is at least $1/2-\negl(n)$. From the definition of $\Col_h$, we have that $h(x)=h(x')$ and thus 
$\sender(h,b;r)=\sender(h,b';r') \coloneqq y$. In other words, the second message of 
the protocol for $b$ with randomness $r$ and $b'$ with randomness 
$r'$ are the same, and thus both pass as valid 
openings in the reveal stage of the protocol: 
$\verifier(h,y,b,r)=1$ and $\verifier(h,y,b',r')=1$.

We are left to show that these are two \emph{distinct} openings for the 
commitment, namely, $b \neq b'$. To show this, we use the statistically hiding property of the commitment scheme. The following claim concludes the proof.
\begin{claim}
Fix any $h$. Then for $((b,r),(b',r')) \gets \Col(h)$ it holds that $\pr{b\neq b'} \ge 1/2 - \negl(n)~.$
\end{claim}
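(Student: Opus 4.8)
The plan is to reduce the claim to the statistical hiding of $\pi$ by first obtaining a closed-form expression for $\pr{b\neq b'}$ and then controlling it via the statistical distance between the sender's two commitment distributions. First I would observe that for the \emph{fixed} receiver message $h$, statistical hiding directly yields that the two sender distributions are close: consider the deterministic malicious receiver $\receiver^*$ whose first message is exactly $h$. Its view in an interaction with $\sender(b)$ is precisely the commitment string $y=\sender(h,b;r)$, so the hiding guarantee gives $\SD(Y_0,Y_1)\le\negl(n)$, where $Y_b$ denotes $\sender(h,b;r)$ over a uniform $r$. Writing $p_b(y)=\pr{Y_b=y}$, this says $\sum_y|p_0(y)-p_1(y)|\le 2\negl(n)$.

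Next I would compute $\pr{b\neq b'}$ explicitly from the definition of $\Col(h)$. Sampling $x_1=(b,r)$ uniformly means $b$ is a uniform bit and $r$ is uniform; set $y=\sender(h,b;r)$, and then $x_2=(b',r')$ is uniform in $h^{-1}(y)$. Letting $N_b(y)$ be the number of coins $r$ with $\sender(h,b;r)=y$, the preimage set splits as $|h^{-1}(y)|=N_0(y)+N_1(y)$ according to the committed bit, and conditioned on $y$ the probability that the second preimage carries the opposite bit is $N_{1-b}(y)/(N_0(y)+N_1(y))$. Grouping the sum over $x_1$ by the committed bit and by $y$, and substituting $N_b(y)=2^{n-1}p_b(y)$, I expect to obtain the clean identity
\begin{align*}
\pr{b\neq b'}=\sum_y \frac{p_0(y)\,p_1(y)}{p_0(y)+p_1(y)}.
\end{align*}

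Finally I would exploit the elementary identity $\frac{p_0p_1}{p_0+p_1}=\frac{p_0+p_1}{4}-\frac{(p_0-p_1)^2}{4(p_0+p_1)}$. Since $\sum_y\brac{p_0(y)+p_1(y)}=2$, the first term contributes exactly $1/2$, giving
\begin{align*}
\pr{b\neq b'}=\frac12-\sum_y\frac{(p_0(y)-p_1(y))^2}{4(p_0(y)+p_1(y))}.
\end{align*}
Using $(p_0-p_1)^2\le |p_0-p_1|\,(p_0+p_1)$ to bound the correction term by $\frac14\sum_y|p_0(y)-p_1(y)|=\frac12\SD(Y_0,Y_1)\le\negl(n)$ then yields $\pr{b\neq b'}\ge\frac12-\negl(n)$, as desired.

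The step I expect to require the most care is the combinatorial derivation of the closed form: one has to correctly partition the uniform choice of $x_1$ according to its bit, track the factor $2^{-n}=\frac12\cdot 2^{-(n-1)}$ against the preimage counts $N_b(y)$, and verify that the two symmetric contributions (from $b=0$ and $b=1$) combine into the single harmonic-type term. The hiding reduction itself is routine once one notes that a deterministic receiver's view collapses to $y$ for fixed $h$, and the final bounding is a one-line estimate.
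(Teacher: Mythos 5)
Your proof is correct, but it reaches the bound by a genuinely different route than the paper's at the crucial step. Both arguments start the same way: fix $h$, invoke statistical hiding against the deterministic receiver whose first message is $h$, and conclude that the two commitment distributions $Y_0=\sender(h,0;R)$ and $Y_1=\sender(h,1;R)$ satisfy $\SD(Y_0,Y_1)\le\negl(n)$. From there the paper argues via the posterior: it introduces the distribution $B_c$ of the committed bit conditioned on the commitment $c$, observes that $\EE{c}{\SD(B_c,B)}$ is bounded by the hiding error $\varepsilon$, and applies Markov's inequality to get that outside a $\sqrt{\varepsilon}$-fraction of commitments the posterior is $\sqrt{\varepsilon}$-close to uniform, yielding $\pr{b=b'}\le 1/2+2\sqrt{\varepsilon}$. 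You instead compute the collision probability exactly: the identity $\pr{b\ne b'}=\sum_y p_0(y)p_1(y)/(p_0(y)+p_1(y))$ checks out (including the normalization $N_b(y)=2^{n-1}p_b(y)$ and the merging of the two symmetric contributions into the harmonic-type term), and the algebraic identity $\frac{p_0p_1}{p_0+p_1}=\frac{p_0+p_1}{4}-\frac{(p_0-p_1)^2}{4(p_0+p_1)}$ together with $(p_0-p_1)^2\le|p_0-p_1|(p_0+p_1)$ bounds the deficit below $1/2$ by $\frac12\SD(Y_0,Y_1)$. Your version is quantitatively sharper (loss $\varepsilon/2$ rather than $2\sqrt{\varepsilon}$) and avoids Markov altogether; the paper's version is more modular and transfers verbatim to the string-commitment variant discussed at the end of that section, where one needs the posterior of an $n$-bit committed string to be close to uniform and your two-value closed form would have to be re-derived for larger alphabets. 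The only point worth making explicit in your write-up is that the sums range over $y$ in the image of $\sender(h,\cdot;\cdot)$, where $p_0(y)+p_1(y)>0$, so the closed form involves no division by zero.
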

\begin{proof}
Let $B$ be the uniform distribution on bits and $R$ the uniform distribution on commitment randomness. For every commitment $c$, let $B_c$ be the distribution on bits given by sampling $(b,r) \gets (B,R)$ conditioned on $\sender(h,b;r)=c$. Let $C$ be the distribution on random commitments to a random bit.

By the statistical hiding property of the commitment scheme,
$$
\Delta((\sender(h,B,R),B),(\sender(h,B',R),B)) \leq \varepsilon\enspace,
$$
where $B'$ is an independent copy of $B$, and $\epsilon =\negl(n)$ is a negligible function. 
Furthermore,
$$
\Delta((\sender(h,B,R),B),(\sender(h,B',R),B)) =  \Delta((C,B_C),(C,B))=  \EE{\begin{subarray}{c} c \gets C\end{subarray}}{\Delta(B_c,B)} \enspace.
$$
By Markov's inequality, it holds that 
$$
\prob{c\gets C}{ \Delta(B_c,B) \geq \sqrt{\varepsilon} } \leq \sqrt{\varepsilon}\enspace.
$$
To conclude the proof note that
\begin{align*}
&\pr{b =  b' : (b,r),(b',r')\gets \Col_h} = \pr{b = b' \colon\;
\begin{array}{l}
(b,r) \gets (B,R)\\
c = \sender(h,b;r)\\
b' \gets B_c
\end{array}} \leq \\
&\pr{b = b' \colon 
\begin{array}{l}
(b,r) \gets (B,R)\\
c = \sender(h,b;r)\\
b' \gets B_c\\
\Delta(B_c,B) \leq \sqrt{\varepsilon}
\end{array}} + \prob{c\gets C}{
\Delta(B_c,B) \geq \sqrt{\varepsilon}}\leq\\
&\left(\frac{1}{2}+\sqrt{\varepsilon}\right)+\sqrt{\varepsilon}=\frac{1}{2}+\negl(n) ~.
\end{align*}
\end{proof}
Overall, the success probability of $\sA$ is at least $1/2 - \negl(n) -\delta\geq n^{-O(1)}$.

\paragraph{Using string commitments.}
The above proof constructs \DCRH from statistically hiding \emph{bit} commitment schemes. For schemes that support commitments to {\em strings}, following the above proof gives a stronger notion of \DCRH, where the adversary's output distribution is $(1-\negl(n))$-far from a random collision distribution. 

Technically, the change in the proof is to interpret $b$ in $x=(b,r)$ as a string of length $n$, rather than as a single bit.
The proof remains the same except that the probability that $b = b'$ is (negligibly close to) $2^{-n}$ instead of $1/2$. Thus, overall the success probability of $\sA$ is at least $1- \negl(n) - \delta$. To ensure a polynomial success probability we can allow any $\delta = 1- n^{-O(1)}$.

\section{From SZK-Hardness to Statistically Hiding Commitments}\label{sec:szk}
In this section, we give a direct construction of a constant-round statistically hiding commitment from average-case hardness in \SZK. This gives an alternative proof to \Cref{cor:szk}. 

\subsection{Hard on Average Promise Problems}

\begin{definition}
A promise problem $(\szkprob_Y,\szkprob_N)$ consists of two disjoint sets of {\em yes instances} $\szkprob_Y$ and {\em no instances} $\szkprob_N$.
\end{definition}

\begin{definition}
A promise problem $(\szkprob_Y,\szkprob_N)$ is hard on average if there exists a probabilistic polynomial-time sampler $\szkprob$ with support $\szkprob_Y\cup\szkprob_N$, such that for any probabilistic polynomial-time decider $\deci$, there exists a negligible function $\negl(n)$, such that

$$
\prob{r\gets \zo^n}{x \in \szkprob_{\deci(x)} \mid x\gets \szkprob(r)} \leq \frac{1}{2}+\negl(n)\enspace.
$$
\end{definition}

\subsection{Instance-Dependent Commitments}

\begin{definition}[\cite{OngV08}] 
An instance-dependent commitment scheme $\idc$ for a promise problem $(\szkprob_Y,\szkprob_N)$ is a commitment scheme where all algorithms get as auxiliary input an instance $x\in \zo^*$. The induced family of schemes $\set{\idc_x}_{x\in \zo^*}$ is
\begin{itemize}
\item 
statistically binding when $x\in \szkprob_N$,
\item 
statistically hiding when $x\in \szkprob_Y$.
\end{itemize} 
\end{definition}

\begin{theorem}[\cite{OngV08}]
Any promise problem $(\szkprob_Y,\szkprob_N)\in \SZK$ has a constant-round instance-dependent commitment.
\end{theorem}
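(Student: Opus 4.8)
The plan is to establish the statement via the completeness of the Statistical Difference problem for $\SZK$ together with the standard distribution-manipulation toolkit. First I would invoke the Sahai--Vadhan theorem that $\mathsf{SD}$ is $\SZK$-complete, giving a polynomial-time reduction that maps an instance $x$ of $(\szkprob_Y,\szkprob_N)$ to a pair of sampling circuits $C_0^x,C_1^x$ describing distributions $D_0^x,D_1^x$ over a common range, such that $x\in\szkprob_Y$ implies $\SD(D_0^x,D_1^x)$ is small while $x\in\szkprob_N$ implies it is large. Applying the polarization (and flattening) lemmas, I would amplify this to an extreme gap, say $\SD(D_0^x,D_1^x)\le 2^{-n}$ on yes-instances and $\SD(D_0^x,D_1^x)\ge 1-2^{-n}$ on no-instances, while arranging each $D_b^x$ to be nearly flat on its support.

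The commitment $\idc_x$ is then essentially non-interactive (hence constant-round): to commit to a bit $\beta$, the sender $\sender$ samples $c\gets D_\beta^x$ and sends $c$, keeping its sampling coins as the decommitment, which the receiver $\receiver$ checks by re-running the circuit. Hiding on yes-instances is immediate: since $D_0^x$ and $D_1^x$ are $2^{-n}$-close, the commitment $c$ is statistically independent of $\beta$, so the two sender views are $\negl(n)$-close. To sharpen the binding guarantee I would not send the raw sample but instead hash it, combining the sample with a freshly chosen $2$-universal hash, using the leftover-hash lemma on the high-entropy (yes) side to preserve hiding and an injectivity/counting argument on the low-entropy (no) side to obtain binding.

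The step I expect to be the main obstacle is \emph{statistical binding on no-instances against an unbounded sender}. A large statistical distance only bounds the \emph{mass} of the overlap of the two supports; it does not make the supports disjoint, so a naive single-sample scheme would be binding only on average and could be broken by an unbounded sender that deliberately searches for a string lying in both supports. The way I would get around this is to route the construction through the Entropy Difference formulation (also $\SZK$-complete), flatten the distributions, and replace ``send the sample'' by ``hash the sample to slightly more bits than its support entropy.'' On no-instances the relevant distribution has small support, so a random $2$-universal hash is injective on that support with overwhelming probability; this turns binding into a purely combinatorial statement that holds even against unbounded senders, while on yes-instances the large entropy lets the leftover-hash lemma guarantee statistical hiding.

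Finally, I would note that both the problem and its complement lie in $\SZK$ (closure under complement, again via $\mathsf{SD}$-completeness), which is what lets a single reduction simultaneously serve the hiding direction on $\szkprob_Y$ and the binding direction on $\szkprob_N$. Constant round complexity follows because all of the above messages---the hash, the hashed sample, and the opening---can be exchanged in a fixed number of rounds independent of $n$.
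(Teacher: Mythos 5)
First, note that the paper does not prove this statement at all: it is imported verbatim from Ong and Vadhan [OngV08], so there is no in-paper proof to compare against. Evaluated on its own terms, your sketch sets up the natural first attempt (reduce to the complement of Statistical Difference, polarize, commit to $\beta$ by sending a sample of $D_\beta^x$) and correctly isolates the one step where this attempt fails, namely statistical binding on no-instances. The proposed repair, however, does not close that gap. Hashing the sample cannot help: an unbounded cheating sender breaks binding by exhibiting a single string $y\in\supp(D_0^x)\cap\supp(D_1^x)$ together with coins for both circuits, and applying any function (2-universal or otherwise) to $y$ only makes the two openings collide more easily, never less; polarization drives the overlap \emph{mass} down to $2^{-n}$ but says nothing about whether the supports intersect, and in general they do. The Entropy Difference route has the same problem in a different guise: on a no-instance you control only the Shannon entropy of the relevant distribution, and a low-Shannon-entropy distribution can still have exponentially large support; flattening converts Shannon entropy into min/max-entropy bounds only for \emph{typical} samples, whereas an unbounded sender is free to work entirely in the atypical tail. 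So the claim that ``a random 2-universal hash is injective on that support'' is exactly the unproven step, and it is false in general.

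This is precisely why the theorem is nontrivial and why it was not resolved by the SD-completeness-plus-polarization toolkit alone. The actual proof in [OngV08] abandons the sample-and-send paradigm: it builds on the Nguyen--Vadhan construction of instance-dependent commitments satisfying only the weaker \emph{1-out-of-2 binding} guarantee (two phases, at least one of which is binding, without the receiver knowing which), and the technical core of Ong--Vadhan is the transformation from 1-out-of-2 binding to standard statistical binding. A self-contained proof would have to reproduce that machinery; as written, your argument establishes statistical hiding on $\szkprob_Y$ but leaves statistical binding on $\szkprob_N$ unproven.
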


\subsection{Witness-Indistinguishable Proofs}

\begin{definition}  
A proof system $\WI$ for an $\NP$ relation $R$ is witness indistinguishable if for any $x,w_0,w_1$ such that $(x,w_0),(x,w_1)\in R$, the verifier's view given a proof using $w_0$ is computationally indistinguishable from its view given a proof using $w_1$.
\end{definition}

Constant-round $\WI$ proofs systems are known from any constant-round statistically-binding commitments \cite{GMW87}. Statistically-binding commitments can be constructed from one-way functions \cite{Naor91}, and thus can also be obtained from average-case hardness in $\SZK$ \cite{OstrovskyW93}.
\begin{theorem}[\cite{GMW87,Naor91,OstrovskyW93}]
Assuming hard-on-average problems in $\SZK$, there exist constant-round witness-indistinguishable proof systems.
\end{theorem}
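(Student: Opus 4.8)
The plan is to compose three classical reductions, each cited in the statement, taking care that every link preserves a constant number of rounds and that witness indistinguishability survives the final soundness-amplification step.

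First I would invoke Ostrovsky and Wigderson~\cite{OstrovskyW93}: the existence of a hard-on-average promise problem in $\SZK$ implies the existence of a one-way function. I would then apply Naor's commitment~\cite{Naor91} (instantiating its pseudorandom generator from the one-way function via H\aa stad et al.~\cite{HILL99}), which yields a two-message --- and hence constant-round --- \emph{statistically binding, computationally hiding} commitment scheme. The statistical binding will later supply (statistical) soundness of the proof, while computational hiding will supply the privacy property.

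Second, with such a commitment in hand, I would instantiate the Goldreich--Micali--Wigderson proof system~\cite{GMW87} for an $\NP$-complete language (e.g.\ graph $3$-colorability). A single execution is a constant-round interactive proof with perfect completeness and some constant soundness error, and since one execution is computationally zero knowledge, it is in particular witness indistinguishable. The only remaining issue is to drive the soundness error down to $\negl(n)$ while keeping the round complexity constant.

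The main obstacle --- and precisely the reason constant-round witness indistinguishability is attainable here whereas constant-round \emph{zero knowledge} is not --- is this soundness amplification. To obtain negligible soundness I would run sufficiently many (say $n$) copies of the basic protocol \emph{in parallel}, which keeps the round complexity constant. Parallel composition is not known to preserve zero knowledge (indeed, this is exactly why constant-round zero knowledge appears to require statistically hiding commitments, via~\cite{GoldreichKahan}, the very object we are ultimately after), but it \emph{does} preserve witness indistinguishability by the closure property of Feige and Shamir~\cite{FeigeShamir90}. Hence the parallel repetition of the GMW protocol, built on the constant-round statistically binding commitment obtained above, is a constant-round witness-indistinguishable proof system for $\NP$ with negligible soundness error, which completes the construction.
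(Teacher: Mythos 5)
Your proposal is correct and follows exactly the chain the paper itself invokes: average-case \SZK{} hardness gives one-way functions via~\cite{OstrovskyW93}, hence Naor's two-message statistically binding commitment~\cite{Naor91,HILL99}, hence the constant-round GMW proof system~\cite{GMW87}, with parallel repetition used to drive down the soundness error while preserving witness indistinguishability. The paper states this composition in a single sentence without elaboration, and your added care about why parallel repetition preserves WI (but not zero knowledge) is the right detail to make explicit.
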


\subsection{The Commitment Protocol}
Here, we give the details of our protocol. Our protocol uses the following ingredients and notation:
\begin{itemize}
\item
 A $\WI$ proof for $\NP$.
\item
 A hard-on average $\SZK$ problem $(\szkprob_Y,\szkprob_N)$ with sampler $\szkprob$.
\item
 An instance-dependent commitment scheme $\idc$ for $\szkprob$.
\end{itemize}
We describe the commitment scheme in Figure \ref{fig:com_szk}.

\protocol
{Protocol \ref{fig:com_szk}}
{A constant round statistically hiding commitment from $\SZK$ hardness.}
{fig:com_szk}
{
Sender input: a bit $m \in \zo$.\\
Common input: security parameter $1^\secparam$.

\paragraph{Coin tossing into the well}
\begin{itemize}
\item
$\receiver$ samples $2\secparam$ independent random strings $\rho_{i,b}\gets \zo^\secparam$, for $i\in[\secparam],b\in \zo$. 
\item
The parties then execute (in parallel) $2\secparam$ statistically-binding commitment protocols $\sbc$ in which $\receiver$ commits to each of the strings $\rho_{i,b}$. We denote the transcript of each such commitment by $C_{i,b}$.

\item
$\sender$ samples $2\secparam$ independent random strings $ \sigma_{i,b}\gets \zo^\secparam$, and sends them to $\receiver$. 
\item
$\receiver$ sets $r_{i,b} = \rho_{i,b}\oplus\sigma_{i,b}$.
\end{itemize}

\paragraph{Generating hard instances}
\begin{itemize}
\item
$\receiver$ generates $2\secparam$ instances $x_{i,b} \gets \szkprob(r_{i,b})$, using the strings $r_{i,b}$ as randomness, and sends the instances to $\sender$.
\item
The parties then execute a $\WI$ protocol in which $\receiver$ proves to $\sender$ that there exists a $b\in \zo$ such that for all $i\in [\secparam]$, $x_{i,b}$ was generated consistently. That is, there exist strings $\set{\rho_{i,b}}_{i\in[\secparam]}$ that are consistent with the receiver's commitments $\set{C_{i,b}}_{i\in[\secparam]}$, and $x_{i,b}=\szkprob(\rho_{i,b}\oplus \sigma_{i,b})$.

As the witness, $\receiver$ uses $b=0$ and the strings $\set{\rho_{i,0}}_{i\in[\secparam]}$ sampled earlier in the protocol.
\end{itemize}

\paragraph{Instance-binding commitment}
\begin{itemize}
\item
The sender samples $2\secparam$ random bits $m_{i,b}$ subject to $m=\bigoplus_{i,b} m_{i,b}$.
\item
The parties then execute (in parallel) $2\secparam$ instance-dependent commitment protocols $\idc_{x_{i,b}}$ in which $\sender$ commits to each bit $m_{i,b}$ using the instance $x_{i,b}$.
\end{itemize}
}

\subsection{Analysis}

\begin{proposition}
Protocol \ref{fig:com_szk} is computationally binding.
\end{proposition}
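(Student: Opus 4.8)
The plan is to reduce computational binding to three facts established earlier in the paper: the statistical binding of the instance-dependent commitment $\idc_x$ on no-instances, the witness-indistinguishability of the $\WI$ proof, and the average-case hardness of $\szkprob$. The starting observation is that, since the committed message is reconstructed as $m=\bigoplus_{i,b} m_{i,b}$, any cheating sender $\sender^\ast$ that opens the commitment to both $m=0$ and $m=1$ must, in at least one coordinate $(i,b)$, open the instance-dependent commitment $\idc_{x_{i,b}}$ to two distinct bits. Because $\idc_x$ is statistically binding whenever $x\in\szkprob_N$, and every instance produced by $\szkprob$ lies in $\szkprob_Y\cup\szkprob_N$, such an equivocation at coordinate $(i,b)$ forces $x_{i,b}\in\szkprob_Y$. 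Thus an equivocation \emph{locates} a yes-instance, and this is exactly what I will turn into a distinguisher for $\szkprob$.

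Concretely, suppose toward contradiction that $\sender^\ast$ breaks binding with noticeable probability $\epsilon=\epsilon(n)$. I would build an efficient decider $D$ for $\szkprob$ as follows. On a challenge $x^\ast\gets\szkprob(U_n)$, the decider samples a uniform witness index $b^\ast\gets\zo$ and a uniform coordinate $i^\ast\gets[n]$, and then plays the honest receiver against $\sender^\ast$ with one modification: it generates the $b^\ast$-tuple $\set{x_{i,b^\ast}}_i$ honestly (so that it knows consistent openings $\set{\rho_{i,b^\ast}}_i$ and can run the $\WI$ prover with witness $b^\ast$), while in the $(1-b^\ast)$-tuple it plants the challenge, $x_{i^\ast,1-b^\ast}:=x^\ast$, and fills the remaining entries with honest samples. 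The key structural point is that, since the $\WI$ proof is carried out with witness $b^\ast$, nothing in $\sender^\ast$'s view ever certifies consistency of the $(1-b^\ast)$-tuple, so planting an arbitrary external instance there is undetectable. Finally, $D$ runs $\sender^\ast$ to the end and outputs ``yes'' if and only if $\sender^\ast$ equivocates on the planted coordinate $\idc_{x_{i^\ast,1-b^\ast}}$.

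The analysis splits according to the challenge. If $x^\ast\in\szkprob_N$ then $\idc_{x^\ast}$ is statistically binding, so $\sender^\ast$ equivocates on the planted coordinate with only negligible probability and $D$ outputs ``yes'' with probability $\negl(n)$. If $x^\ast\in\szkprob_Y$, I want to show the planted coordinate is equivocated with probability at least $(\epsilon-\negl(n))/(2n)$, which I establish in two moves. First, I compare the planted experiment to the real protocol run with witness $b^\ast$: the two differ only in the single value $x_{i^\ast,1-b^\ast}$, which is $\szkprob(U_n)$ in the real protocol and $\szkprob(U_n)\mid_{\szkprob_Y}$ in the planted-yes experiment; since these two instance distributions are computationally indistinguishable (a consequence of average-case hardness, as $\szkprob(U_n)$ is a near-balanced mixture of the two conditional distributions), the probability of equivocating on coordinate $(i^\ast,1-b^\ast)$ changes by at most $\negl(n)$. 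Second, I invoke witness-indistinguishability: the honest receiver can prove its statement with either witness, so $\sender^\ast$'s entire view—and hence the set of coordinates on which it equivocates—is computationally independent of $b^\ast$, and its binding-breaking probability is $\epsilon\pm\negl(n)$ for a uniform $b^\ast$. Since $i^\ast$ is unused in the real protocol, the target $(i^\ast,1-b^\ast)$ is a uniformly random location, independent of the view, among the $2n$ coordinates; as every binding-breaking run equivocates on at least one coordinate, averaging yields equivocation on the target with probability at least $(\epsilon-\negl(n))/(2n)$. Combining the two moves, $D$ distinguishes $\szkprob(U_n)\mid_{\szkprob_Y}$ from $\szkprob(U_n)\mid_{\szkprob_N}$ with advantage $\epsilon/(2n)-\negl(n)$, contradicting average-case hardness.

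The main obstacle is the yes-instance lower bound, i.e.\ arguing that the sender equivocates precisely at the \emph{planted} coordinate with probability scaling like $\epsilon/n$. This is where both nontrivial ingredients must be combined: witness-indistinguishability guarantees that $\sender^\ast$ cannot bias \emph{which} tuple it attacks toward the witness tuple (so the equivocation location is essentially independent of $b^\ast$ and hits a random target with probability $\Theta(1/n)$), while hardness lets me swap a genuine sample for a guaranteed yes-instance at the target coordinate without affecting $\sender^\ast$'s behavior. A secondary point to handle with care is that $\szkprob(U_n)$ is near-balanced between $\szkprob_Y$ and $\szkprob_N$ (which itself follows from hardness via the trivial constant deciders), as this is needed both for the mixture-indistinguishability step and for the final conversion of distinguishing advantage into a contradiction. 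Finally, I should verify that, against a cheating sender, the honest receiver's coin-tossing-into-the-well phase indeed forces each $r_{i,b}$ to be uniform, so that the real-protocol instances are genuine $\szkprob(U_n)$ samples, as the reduction assumes.
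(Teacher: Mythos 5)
Your reduction is essentially the paper's: plant the external challenge at a uniformly random coordinate, run the $\WI$ proof with the witness for the \emph{other} tuple, detect equivocation at the planted coordinate, and use the fact that a binding break forces equivocation at some coordinate (since $m=\bigoplus_{i,b}m_{i,b}$) together with the independence of the planted location to get the $\varepsilon/(2\secparam)$ bound; the cosmetic differences (which of the two tuples carries the challenge, outputting a hard ``yes/no'' versus guessing at random on non-equivocation, and your explicit near-balance observation) do not change the argument.

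There is, however, one step where your justification is not sufficient as written. You claim that because the $\WI$ proof is run with witness $b^*$, ``nothing in $\sender^*$'s view ever certifies consistency of the $(1-b^*)$-tuple, so planting an arbitrary external instance there is undetectable.'' But $\sender^*$'s view does contain the $\sbc$ commitment $C_{i^*,1-b^*}$ from the coin-tossing phase, and in your planted experiment the instance $x^*$ is \emph{inconsistent} with the string committed there, whereas in the real protocol it is consistent. Absence of a certifying proof does not make this joint distribution identical; you need the \emph{computational hiding} of $\sbc$ to argue that $\sender^*$ cannot notice the inconsistency. This is exactly the paper's hybrids $\hyb_1$/$\hyb_2$: first observe that in the real protocol $x_{i^*,1-b^*}$ can be resampled using fresh randomness $\rho'$ independent of the committed value (so the instance really is a fresh $\szkprob(U_\secparam)$ sample even against an adversarially chosen $\sigma$ --- this also resolves the coin-tossing concern you flag at the end), and then switch the commitment's content, invoking the hiding of $\sbc$. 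Your sketch never invokes the hiding of $\sbc$ anywhere, so as stated the ``first move'' of your yes-case analysis (that the two experiments ``differ only in the single value $x_{i^*,1-b^*}$'') is false; the fix is standard but must appear in the proof.
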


\begin{proof}
Let $\sender^*$ be any probabilistic polynomial-time sender that breaks binding in Protocol~\ref{fig:com_szk} with probability~$\varepsilon$. We use $\sender^*$ to construct a probabilistic polynomial-time decider $\deci$ for the $\SZK$ problem $\szkprob$ with advantage $\varepsilon/4\secparam -\negl(n)$.

Given an instance $x \gets \szkprob$, the decider $\deci$ proceeds as follows:
\begin{itemize}
\item
It samples at random $i^*\in [\secparam]$ and $b^*\in \zo$.
\item
It executes the protocol $(\sender^*,\receiver)$ with the following exceptions:
\begin{itemize}
\item
The instance $x_{i^*,b^*}$, generated by $\receiver$, is replaced with the instance $x$, given to $D$ as input.
\item
In the $\WI$ protocol, as the witness we use $1 \oplus b^*$ and the strings $\set{\rho_{i,1\oplus b^*}}_{i\in[\secparam]}$  (instead of $0$ and the strings $\set{\rho_{i,0}}_{i\in[\secparam]})$.
\end{itemize}
\item
Then, at the opening phase, if $\sender^*$ equivocally  opens the $(i^*,b^*)$-th instance-dependent commitment, $D$ declares that $x\in \szkprob_Y$. Otherwise, it declares that $x\in \szkprob_\beta$ for a random $\beta\in\set{Y,N}$.
\end{itemize}

\paragraph{Analyzing $\deci$'s advantage.} Denote by $E$ the event that in the above experiment $\sender^*$ equivocally opens the $(i^*,b^*)$-th instance-dependent commitment. We first observe that the advantage of $D$ in deciding $\szkprob$ is at least as large as the probability that $E$ occurs.
\begin{myclaim}\label{clm:deci_prob} $\pr{ x \in \szkprob_{D(x)} } \geq \frac{1+\pr{E} }{2} - \negl(\secparam)$. 
\end{myclaim}
\begin{proof}By the definition of $D$,
\begin{align*}
&\pr{ x \in \szkprob_{D(x)} \mid E } =  
\pr{ x \in \szkprob_{Y} \mid E } = 1 - \pr{ x \in \szkprob_{N} \mid E } \geq 1 - \frac{\pr{E\mid x \in \szkprob_{N}}}{\pr{E}} 
\enspace,\\
&\pr{ x \in \szkprob_{D(x)} \mid \overline{E} } =  \frac{1}{2}\enspace.
\end{align*}
Furthermore, if $x\in \szkprob_N$ (namely, it is a no instance), then $\idc_x$ is binding, and thus
\begin{align*}
&\pr{E\mid x \in \szkprob_{N}}= \negl(\secparam)\enspace.
\end{align*}
Claim \ref{clm:deci_prob} now follows by the law of total probability.
\end{proof}

From hereon, we focus on showing that $E$ occurs with high probability.
\begin{myclaim}\label{clm:equi_prob}
$\pr{E}\geq \frac{\varepsilon}{2n} -\negl(\secparam)$.
\end{myclaim}

\begin{proof}
To prove the claim, we consider hybrid experiments $\hyb_0,\dots,\hyb_4$, and show that that the view of the sender $\sender^*$ changes in a computationally indistinguishable manner throughout the hybrids. We then bound the probability that $E$ occurs in the last hybrid experiment.
\begin{description}
\item[$\hyb_0$:] In this experiment, we consider an execution of $\deci(x)$ as specified above.
\item[$\hyb_1$:] Here $x$ is not sampled ahead of time, but rather first the value $\sigma_{i^*,b^*}$ is obtained from $\sender^*$, then a random value $\rho'\gets \zo^\secparam$ is sampled, and $x$ is sampled using randomness $r_{i^*,b^*}=\sigma_{i^*,b^*}\oplus \rho'$. Since $\rho'$ is sampled independently of the rest of the experiment, the sender's view in $\hyb_1$ is identically distributed to its view in $\hyb_0$. 
\item[$\hyb_2$:] Here the $(i^*,b^*)$-th commitment to $\rho_{i^*,b^*}$ is replaced with a commitment to $\rho'$. By the (computational) hiding of the commitment $\sbc$, the sender's view in $\hyb_2$ is computationally indistinguishable from its view in $\hyb_1$. 
\item[$\hyb_3$:] Here, in the $\WI$ protocol, instead of using as the witness $1 \oplus b^*$ and the strings $\set{\rho_{i,1\oplus b^*}}_i$, we use $0$ and the strings $\set{\rho_{i,0}}_i$. By the (computational) witness-indistinguishability of the protocol, the sender's view in $\hyb_3$ is computationally indistinguishable from its view in $\hyb_2$. 
\item[$\hyb_4$:] In this experiment, we consider a standard execution of the protocol between $\sender^*$ and $\receiver$ (without any exceptions). The sender's view in this hybrid is identical to its view in $\hyb_3$ (by renaming $\rho'=\rho_{i^*,b^*}$ and $x=x_{i^*,b^*}$).
\end{description}
It is left to bound from below the probability that $E$ occurs in $\hyb_4$. That is, when we consider a standard execution of $(\sender^*,\receiver)$ and sample $(i^*,b^*)$ independently at random.

Indeed, note that since the plaintext bit $m$ is uniquely determined by the bits $\set{m_{i,b}}_{i,b}$. Whenever $\sender^*$ equivocally opens the commitment to two distinct bits, there exists (at least one) $(i,b)$ such that $\sender^*$ equivocally opens the $(i,b)$-th instance-dependent commitment. Since in a standard execution $\sender^*$ equivocally opens the commitment with probability at least $\varepsilon$, and $(i^*,b^*)$ is sampled independently, $E$ occurs in this experiment with probability at least $\frac{\varepsilon}{2\secparam}$.

Claim \ref{clm:equi_prob} follows.
\end{proof}
This completes the proof that the scheme is binding.
\end{proof}

\begin{proposition}
Protocol \ref{fig:com_szk} is statistically hiding.
\end{proposition}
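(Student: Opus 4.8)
The plan is to show that for every (unbounded, and by the hiding definition without loss of generality deterministic) malicious receiver $\receiver^*$, the views $\view_{\langle\sender(0),\receiver^*\rangle}(n)$ and $\view_{\langle\sender(1),\receiver^*\rangle}(n)$ are statistically close. The crux is that the coin-tossing and instance-generation phases force at least one of the $2n$ instances $x_{i,b}$ to be a genuine yes-instance with overwhelming probability; once this holds, the corresponding instance-dependent commitment statistically hides its share, and the secret sharing $m=\bigoplus_{i,b}m_{i,b}$ propagates this to $m$. Concretely, I would fix the prefix $\tau$ of the transcript consisting of all messages up to and including the $\WI$ proof, and call $\tau$ \emph{good} if there is a column $b^\ast=b^\ast(\tau)$ and an index $i^\ast=i^\ast(\tau)$ with $x_{i^\ast,b^\ast}\in\szkprob_Y$.

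First I would argue that $\tau$ is good except with negligible probability, using three ingredients. (i) Statistical soundness of the $\WI$ proof: since it is a proof (instantiated from the statistically binding $\sbc$), its soundness holds even against the unbounded prover $\receiver^\ast$, so except with negligible probability the proved statement is true, i.e.\ there is a column $b^\ast$ all of whose instances are generated consistently. (ii) Statistical binding of $\sbc$: for that column each $\rho_{i,b^\ast}$ is pinned to the unique value determined by $C_{i,b^\ast}$, and crucially this happens \emph{before} the honest sender's uniform contribution $\sigma_{i,b^\ast}$ is sent; hence $r_{i,b^\ast}=\rho_{i,b^\ast}\oplus\sigma_{i,b^\ast}$ is uniform and, across $i$, independent, so the instances $\set{x_{i,b^\ast}}_{i\in[n]}$ are i.i.d.\ fresh samples of $\szkprob$. (iii) Average-case hardness of $\szkprob$: the constant deciders force $\prob{r}{\szkprob(r)\in\szkprob_Y}\ge \tfrac12-\negl(n)$ (otherwise always answering $N$ would beat $\tfrac12$), so the probability that \emph{all} $n$ instances in column $b^\ast$ are no-instances is at most $(\tfrac12+\negl(n))^{n}=\negl(n)$.

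Next, conditioning on a good $\tau$, I would complete the argument via the secret sharing. The honest sender picks shares $\set{m_{i,b}}$ uniformly subject to $\bigoplus_{i,b}m_{i,b}=m$ and commits to each via $\idc_{x_{i,b}}$; equivalently, it picks the $2n-1$ shares $\set{m_{i,b}}_{(i,b)\neq(i^\ast,b^\ast)}$ uniformly and independently and sets $m_{i^\ast,b^\ast}=m\oplus\bigoplus_{(i,b)\neq(i^\ast,b^\ast)}m_{i,b}$. The commitments to all shares other than $(i^\ast,b^\ast)$ are then distributed independently of $m$, while the commitment to $m_{i^\ast,b^\ast}$ is an $\idc_{x_{i^\ast,b^\ast}}$ commitment on a yes-instance and is therefore statistically hiding. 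Hence flipping $m$ (equivalently, flipping $m_{i^\ast,b^\ast}$) changes $\receiver^\ast$'s view by at most a negligible amount in statistical distance. Averaging over $\tau$ and absorbing the negligible mass of bad $\tau$ gives $\SD\!\left(\view_{\langle\sender(0),\receiver^\ast\rangle}(n),\view_{\langle\sender(1),\receiver^\ast\rangle}(n)\right)\le\negl(n)$, as desired.

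The hard part will be the combination in ingredients (ii)–(iii): establishing that the consistent column really consists of fresh, independent samples of $\szkprob$. This is exactly where the coin-tossing-into-the-well phase and the statistical binding of $\sbc$ are indispensable, since without pinning each $\rho_{i,b^\ast}$ down before $\sigma_{i,b^\ast}$ is revealed, a malicious $\receiver^\ast$ could adaptively bias every instance to be a no-instance and destroy hiding. Once freshness is in hand, the union bound over the $n$ samples (needing only that yes-instances occur with probability bounded away from $0$) and the reduction to the statistical hiding of a single $\idc$ commitment are routine.
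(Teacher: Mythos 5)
Your overall strategy is the same as the paper's: isolate the preamble (everything through the $\WI$ proof), show that with overwhelming probability it guarantees a yes-instance $x_{i^\ast,b^\ast}$, and then reduce hiding of $m$ to the statistical hiding of the single commitment $\idc_{x_{i^\ast,b^\ast}}$ via the secret sharing. Your ingredients (ii) and (iii) — statistical binding of $\sbc$ pinning $\rho_{i,b^\ast}$ before $\sigma_{i,b^\ast}$ is sent, so that the consistent column consists of fresh independent samples, combined with $\prob{r}{\szkprob(r)\in\szkprob_Y}\ge\tfrac12-\negl(n)$ — match the paper's argument (the paper additionally takes an explicit union bound over the two candidate columns $\beta\in\zo$, since which column is the consistent one may depend on the $\sigma$'s; you should make that step explicit).

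There is, however, one genuine gap in step (i). You claim that soundness of the $\WI$ proof implies that ``except with negligible probability the proved statement is true,'' and hence that $\tau$ is good (contains a yes-instance) with overwhelming probability. This is false: soundness only says that \emph{if} the statement is false, the verifier rejects with overwhelming probability. An unbounded $\receiver^\ast$ is free to send instances for which the statement is simply false (e.g., all generated inconsistently); then the statement is false with probability $1$, your event ``$\tau$ is good'' may have probability $0$, and your argument collapses for that adversary even though hiding still holds. The fix is exactly what the paper does: define the preamble to be \emph{admissible} if \emph{either} some $x_{i^\ast,b^\ast}\in\szkprob_Y$ \emph{or} the sender rejects the $\WI$ proof, prove that $\pr{A\wedge\overline{Y}}\le\negl(n)$ by splitting on whether the statement is true ($\pr{A\wedge\overline{T}}\le\negl$ by soundness, $\pr{\overline{Y}\wedge T}\le\negl$ by your coin-tossing argument), and observe that in the rejection branch the sender aborts before any $\idc$ commitment is made, so the view is trivially independent of $m$. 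With that case added, your proof goes through.
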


\begin{proof}
Let $\receiver^*$ be any (computationally unbounded) receiver. We show that the view of $\receiver^*$ given a commitment to $m=0$ is statistically indistinguishable from its view given a commitment to $m=1$.

For this purpose, consider the view of the receiver $\receiver^*$ after the coin tossing and instance-generation phase (and before the instance-dependent commitment phase). We shall refer to this as the {\em preamble view}. We say that the preamble view is {\em admissible}, if either of the following occurs:
\begin{itemize}
\item
Let $\set{x_{i,b}}_{i,b}$ be the instances sent by $\receiver^*$. Then there exists $i^*,b^*$ such that $x_{i^*,b^*} \in \szkprob_Y$.
\item
The sender $\sender$ rejects the $\WI$ proof that $\set{x_{i,b}}_{i,b}$ were properly generated.
\end{itemize}
To complete the proof, we show that the preamble view is admissible with overwhelming probability, and that conditioned on any admissible preamble view, the view of $\receiver^*$ given a commitment to $m=0$ is statistically indistinguishable from its view given a commitment to $m=1$. Since the preamble view is completely independent of $m$, the above two conditions are sufficient to establish  statistical indistinguishability of the total views.

\begin{myclaim}\label{clm:adm_prob}
The probability that the preamble view is not admissible is negligible.
\end{myclaim}
\begin{proof}
Let $A$ be the event that the $\WI$ proof is accepted and let $Y$ be the event that for some $(i,b)$, $x_{i,b}$ is a yes instance. To show that the preamble view is not admissible with negligible probability, we would like to prove that
$$
\pr{A\wedge \overline{Y}} \leq \negl(\secparam)\enspace.
$$

Let $T$ be the event that the statement proven by $\receiver^*$ in the $\WI$ protocol is true. Namely, there exists $b\in \zo$ such that all $\set{x_{i,b}}_{i}$ are generated consistently with the coin-tossing phase (and in particular where the coin-tossing phase consists of valid commitments $\set{C_{i,b}}_{i}$).

First, note that by the soundness of the $\WI$ system, the probability that the preamble is admissible, and in particular the proof is accepted, when the statement is false, is negligible:
$$\pr{A \wedge \overline{T}}\leq \negl(\secparam)\enspace.$$
We now show:
$$\pr{\overline{Y} \wedge {T}}\leq \negl(\secparam)\enspace.$$

For this purpose, fix any $\sbc$ commitments $\set{C_{i,b}}_{i,b}$. Let $F=F[\set{C_{i,b}}_{i,b}]$ be the event, over the sender randomness $\set{\sigma_{i,b}}_{i,b}$, that there exists $\beta\in \zo$ such that $\set{C_{i,\beta}}_i$ are valid commitments to strings $\set{\rho_{i,\beta}}_i$ and for all $i$, $\szkprob(\rho_{i,\beta}\oplus \sigma_{i,\beta})=x_{i,\beta}\in \szkprob_N$. We show
$$
 \pr{F} \leq 2^{-\Omega(\secparam)}\enspace.
$$
This is sufficient since 
$$
\pr{\overline{Y}\wedge T} \leq \max_{\begin{subarray}{c}
C_{1,0} \dots  C_{\secparam,0}\\
C_{1,1} \dots  C_{\secparam,1}
\end{subarray} }\pr{F} \leq 2^{-\Omega(\secparam)}\enspace.
$$
To bound the probability that $F$ occurs, fix any $\beta$ and commitments $\set{C_{i,\beta}}_i$ to strings $\set{\rho_{i,\beta}}_i$. Then the strings $\rho_{i,\beta}\oplus\sigma_{i,\beta}$ are distributed uniformly and independently at random. Since $\szkprob \in \szkprob_Y$ with probability at least $0.49$, and taking a union bound over both $\beta\in \zo$, the bound follows.

This concludes the proof of Claim \ref{clm:adm_prob}.
\end{proof}

\begin{myclaim}\label{clm:stat_ind}
Fix any admissible preamble view $V$. Then, conditioned on $V$ the view of $\receiver^*$ when given a commitment to $m=0$ is statistically indistinguishable from its view when given a commitment to $m=1$.  
\end{myclaim}

\begin{proof}
If $V$ is such that the $\WI$ proof is rejected then $\sender$ aborts and the view of $\receiver^*$ remains independent of $m$. Thus, from hereon, we assume that the instances corresponding to $V$ include an instance $x_{i^*,b^*}\in \szkprob_Y$. In particular, the corresponding instance-dependent commitment $\idc_{x_{i^*,b^*}}$ is statistically hiding.

It is left to note that in any execution $(\sender,\receiver^*)$, with either $m\in\zo$, the bits $M_{-i}:=\set{m_{i,b}}_{(i,b)\neq (i^*,b^*)}$ are distributed uniformly and independently at random. Conditioned on $V$ and $M_{-i}$, only the bit 
$$m_{i^*,b^*} = m\bigoplus_{m'\in M_{-i}} m' $$ 
depends on $m$. By the statistical hiding of $\idc_{x_{i^*,b^*}}$ a commitment to $0\bigoplus_{m'\in M_{-i}} m'$ is statistically indistinguishable from a commitment to $1\bigoplus_{m'\in M_{-i}} m'$.

This concludes the proof of Claim \ref{clm:stat_ind}.
\end{proof}

\end{proof}

\bibliographystyle{alpha}
\bibliography{crypto}

\end{document}